\definecolor{darkred}  {rgb}{0.5,0,0}
\definecolor{darkblue} {rgb}{0,0,0.5}
\definecolor{darkgreen}{rgb}{0,0.5,0}
\newtheorem{theorem}{Theorem}
\newtheorem*{theorem*}{Theorem}
\newtheorem{proposition}[theorem]{Proposition}
\newtheorem{definition}[theorem]{Definition}
\newtheorem{algorithm}[theorem]{Algorithm}
\newtheorem{lemma}[theorem]{Lemma}
\def\Z{{\mathbb{Z}}}
\def\F{\mathbb{F}}
\def\poly{{\rm poly}}
\newcommand{\be}{\begin{eqnarray}}
\newcommand{\ee}{\end{eqnarray}}
\def\ba#1\ea{\begin{align}#1\end{align}}
\newcommand\ket[1]{{ |{#1} \rangle }}
\def\P{{\sf{P}}}
\newcommand{\vD}{{\vec{\Delta}}}
\newcommand{\vs}{{\vec{s}}}
\newcommand{\vx}{{\vec{\x}}}
\newcommand{\vy}{{\vec{\y}}}
\newcommand{\vz}{{\vec{\z}}}
\newcommand{\rl}{{r}}
\newcommand{\dman}{\Delta_{{\rm M}}}
\newcommand{\dimin}{{n}}
\renewcommand{\P}{\Pr}   % Easier for me to see
\newcommand{\eps}{{\varepsilon}}
\renewcommand{\epsilon}{\varepsilon}
\newcommand{\brs}{[\rl]}
\newcommand{\va}{{\vec{a}}}
\newcommand{\vc}{{\vec{c}}}
\newcommand{\vzero}{{\mathbf{0}}}
\newcommand{\matA}{\mathbf A}
\newcommand{\matG}{\mathbf G}
\newcommand{\pcs}[1]{\left| \boxed{\chi_\va} \cdot (#1) ,\va\right\rangle}
\renewcommand{\pcs}[1]{\ket{PC(#1),\va}}
\renewcommand{\pcs}[1]{\ket{PC(#1),\va}}
\renewcommand{\pcs}[1]{\ket{\psi_{#1}}}
\newcommand{\cube}[1]{\ket{C(#1)}}  % normal 
\newcommand{\cubep}[1]{\ket{C(#1)}}  % normal 
\newcommand{\cubet}[1]{\langle C(#1) }  % transpose with open end
\renewcommand{\va}{\mathbf a}
\renewcommand{\vc}{\mathbf c}
\newcommand{\ve}{\mathbf e}
\renewcommand{\vs}{\mathbf s}
\newcommand{\vt}{\mathbf t}
\renewcommand{\vx}{\mathbf x}
\renewcommand{\vy}{\mathbf y}
\renewcommand{\vz}{\mathbf z}
\renewcommand{\vD}{\mathbf \Delta}
\renewcommand{\rl}{{\sigma}}
\begin{document}

\title{An Efficient Quantum Decoder for Prime-Power Fields}
\author{Lior Eldar}

\maketitle

\begin{abstract}
We consider a version of the nearest-codeword problem on finite fields $\F_q$ using the Manhattan distance, 
an analog of the Hamming metric for non-binary alphabets.
Similarly to other lattice related problems, this problem is NP-hard even up to constant factor approximation. 
We show, however, that for $q = p^m$ where $p$ is small relative to the code block-size $n$, there is a quantum algorithm that solves the problem in time $\poly(n)$, for approximation factor $1/n^2$, for any $p$.  
On the other hand, to the best of our knowledge, classical algorithms can efficiently solve the problem only for much smaller inverse polynomial factors.
Hence, the decoder provides an exponential improvement over classical algorithms, and places limitations on the cryptographic security of large-alphabet extensions of code-based cryptosystems like Classic McEliece.
\end{abstract}

\section{Introduction}

Error correcting codes are linear subspaces of finite-field vector spaces
that allow to protect information against random, and even adversarial errors.
The problem of designing good, efficiently decodable, error-correcting codes is notoriously difficult, and is in fact tantamount to an art form: interestingly, it is difficult not because large minimal-distance codes are hard to find (in fact typically a random code does have a large minimal distance), but rather because it is hard to find such codes that are simultaneously efficiently decodable. 

The Maximum Likelihood Decoding (MLD) Problem of error correcting codes is well known to be NP-hard since the work of Berlekamp, McEliece and Tilborg \cite{BMT78}. 
Formally, for the MLD problem we are given a "syndrome" $\vs \in \F_q^m$, a "parity check matrix" $\matA$ and are asked to find $\ve \in \F_q^n$ of weight at most $w$ such that $\matA \ve = \vs$.  In a related problem, called the Nearest Codeword Problem (NCP) \cite{ABSS97,  Reg03}, we are given a target vector $\vt$, a generator matrix $\matA$ and are asked to find the closest codeword to $\vt$, namely $\vs$ such that $\matA \vs$ is closest to $\vt$, provided that this distance is at most $w$.
This problem too, is known to be NP-hard  even to sub-polynomial approximation factors \cite{ABSS97} under reasonable complexity assumptions.

\subsection{Defining BNCP}

The NCP problem is analogous to the closest vector problem (CVP) defined on Euclidean lattices, which is also notoriously hard (see e.g. \cite{DKRS03}).  Yet, as is often the case in error-correcting scenarios where the error has bounded length, and similarly to the MLD problem, one can consider a bounded error variant of NCP which we call here the Bounded NCP, namely we  given a target vector $\vt$, a matrix $\matA$ we are asked to find $\vt$'s closest vector in the span of $\matA$, provided that this distance is, say, at most $1/10$ of the minimal error correcting distance.

\begin{definition}\label{def:bncp}

\textbf{Bounded Nearest Codeword Problem, Hamming Metric}

\noindent
Given an error correcting code ${\cal C} = [n,k,d] \subseteq \F_2^n$,
where $d$ is the minimal Hamming distance between any pair of distinct codewords in ${\cal C}$, and is
generated by matrix $\matA \in \F_2^{n \times k}$, and a vector $\vt$ such that for some $\vs \in \F_2^k$:
$$
\Delta(
\vt, \matA \vs) 
\leq 
\eps \cdot d
$$
where $\Delta(\vx,\vy)$ refers to the Hamming distance between $\vx,\vy$.
We are asked to find $\vs$.
\end{definition}

The definition above uses the Hamming distance between words - namely the number of positions in which two strings are different.  Yet, for non-binary $q$-ary alphabets, it is of interest to consider different metrics that take into account the actual labels. 
Considering the alphabet as the additive group $\Z_q$, one such distance is called the Lee Distance 
\footnote{In fact, the Lee distance over the ring of integers generates a metric space over the ring since it satisfies, in addition to positivity, and symmetry, the triangle inequality.}
and is defined as follows:
$$
\Delta_{\rm L}(\vx, \vy) = \sum_{i=1}^n
\min\{ |x_i - y_i|, q - |x_i - y_i| \}
$$
Another distance on $\Z_q$ is the well-known Manhattan distance corresponding to the $\ell_1$-norm of Euclidean space:
$$
\dman(\vx, \vy) 
= 
\sum_{i=1}^n
|x_i - y_i|
$$

One can then reconsider Definition \ref{def:bncp} for large alphabets:
\begin{definition}\label{def:bncp2}

\textbf{Bounded Nearest Codeword Problem ($\eps$-BNCP), Manhattan Distance}

\noindent
Given is an error correcting code ${\cal C} = [n,k,d]$,
where $d$ is the minimal Manhattan distance between any pair of distinct codewords in ${\cal C}$. ${\cal C}$ is
generated by matrix $\matA \in \F_q^{n \times k}$.
We are also given a vector $\vt$ such that for some $\vs \in \F_q^k$:
$$
\Delta_M( \vt, \matA \vs) \leq \eps \cdot d
$$
We are asked to find $\vs$.
\end{definition}

\subsection{Hardness of BNCP}

In general, BNCP has no known efficient classical algorithms, 
and the assumed hardness of this problem is, in fact, central to the security of the McEliece cryptosystem (\cite{CS98}, \cite{mceliece}), one of the finalists in the NIST effort to design quantum-secure cryptosystems \cite{NIST20}.

For $q=2$ the definition above generalizes the Hamming metric, hence BNCP is NP-hard to solve for general $q$.
Let us now examine the behavior of its complexity for specific values of $q$.
On one hand, for $q$ which is a prime number, $\F_q$ inherits its multiplication / addition table from $\Z_q$ and in that case the problem is nearly identical 
\footnote{Up to the fact that $q$-ary lattices are in fact integer lattices with a shortest vector length at most $q$, whereas $\F_q$ lattices can have longer shortest vectors, for example the $1$-dimensional code generated by $(1,2,\hdots, q-1)$}
to the Bounded Distance Decoding for $q$-ary lattices, a problem whose $\eps$-approximation is known to be at least as hard as computing the (unique) shortest vector of an integer lattice up to a factor $1/\eps$ \cite{LM09}.

As further testament for the generic hardness of this problem: the result of \cite{ABSS97} on the hardness of approximation of the decisional version of the nearest-codeword problem w.r.t. the Hamming metric, can be readily extended to the Manhattan  / Lee distances, albeit with a diminished promise gap:
\begin{theorem}\label{thm:hardness}

\textbf{NP-hardness of constant factor approximation of decisional BNCP}

\noindent
Let ${\cal C} = [n,k,d]$ be some code of $\F_q^n$.
There exists a constant $c>0$ such that
if
$q = p^m$ for some integer $m$, and $c>p$ then it is NP-hard to decide whether a vector $\vt \in \F_q^n$ is at Manhattan distance at most $L$ or at distance at least $(c/p) \cdot L$ from ${\cal C}$.
\end{theorem}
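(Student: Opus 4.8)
The plan is to reduce the Hamming-metric gap nearest-codeword problem, whose hardness is supplied by \cite{ABSS97}, to the Manhattan-metric version, losing only a factor of about $p$ in the promise gap. Recall that \cite{ABSS97} gives, for every constant $\gapc>1$, a polynomial-time reduction from an NP-hard language producing decisional nearest-codeword instances $(\matA,\vt,L)$ in which \Yes{} instances have some codeword within Hamming distance $L$ of $\vt$ while \No{} instances have every codeword at Hamming distance more than $\gapc L$; the theorem's hypothesis $\gapc>p$ merely says we have fixed this gap constant large enough, which is always possible. First I would arrange for these hard instances to live over the prime subfield $\F_p$ of $\F_q=\F_{p^m}$ --- this holds for $p=2$ directly in \cite{ABSS97}, and for odd $p$ by the standard field-agnostic form of the reduction, constant-factor hardness of the nearest-codeword problem being known over every prime field --- and I would fix an $\F_p$-basis $1=\alpha_1,\dots,\alpha_m$ of $\F_q$, so that every $z\in\F_q$ is represented by its coordinate vector in $\{0,\dots,p-1\}^m$ and every word of $\F_q^n$ by a word of $\F_p^{mn}$, the representation underlying $\dman$.

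The elementary inequality powering the reduction is that over $\F_p$, with coordinates read as the integers $0,\dots,p-1$,
$$
\Delta(\vx,\vy)\ \le\ \dman(\vx,\vy)\ \le\ (p-1)\,\Delta(\vx,\vy),
$$
since each coordinate in which $\vx,\vy$ differ contributes exactly $1$ to the Hamming distance $\Delta$ and between $1$ and $p-1$ to $\dman$; in particular for $p=2$ the two metrics coincide and the theorem is immediate, so the content is really for odd $p$.

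The reduction then reads: from an $\F_p$-instance $(\matA',\vt',L_0)$ with $\matA'\in\F_p^{n\times k}$, output the $\F_q$-code $\mathcal C$ generated by the same matrix $\matA'$ read over $\F_q\supseteq\F_p$, the same target $\vt'$ read inside $\F_q^n$, and threshold $L=(p-1)L_0$. Writing a candidate $\vs\in\F_q^k$ as $\vs=\sum_j\vs^{(j)}\alpha_j$ with $\vs^{(j)}\in\F_p^k$, and using that $\matA'$ has $\F_p$-entries while $\vt'$ has vanishing component in every basis direction $j\ge 2$, one gets $\dman(\vt',\matA'\vs)=\sum_i\big(|(\matA'\vs^{(1)})_i-t'_i|+\sum_{j\ge2}|(\matA'\vs^{(j)})_i|\big)\ge\dman(\vt',\matA'\vs^{(1)})$, so the nearest $\F_q$-codeword to $\vt'$ in Manhattan distance is at the same distance as the nearest $\F_p$-codeword (the reverse inequality being trivial). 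Hence in \Yes{} instances $\dman(\vt',\mathcal C)\le(p-1)L_0=L$, and in \No{} instances $\dman(\vt',\mathcal C)\ge\Delta(\vt',\mathcal C)>\gapc L_0=\tfrac{\gapc}{p-1}L\ge\tfrac{\gapc}{p}L$, which is exactly the asserted gap; the minimal Manhattan distance $d$ of $\mathcal C$ plays no role in this decision problem, and the Lee-metric variant follows verbatim from $\Delta\le\Delta_{\rm L}\le\dman$.

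I expect the only real work to be the bookkeeping in that last step --- verifying that passing from an $\F_p$-linear code to its $\F_q$-span creates no codeword closer, in Manhattan distance under the chosen representation, to the subfield target than the original code already had, and similarly that the minimum distance is preserved. Given \cite{ABSS97} and the sandwich inequality this is a short computation, and crucially no new gap amplification or PCP machinery is invoked --- which is exactly why the gap degrades by only the constant factor $p$ rather than collapsing entirely.
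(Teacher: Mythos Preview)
Your approach differs from the paper's: you reduce black-box from the Hamming-metric gap-NCP of \cite{ABSS97} via the sandwich $\Delta \le \dman \le (p-1)\Delta$, whereas the paper re-runs the ABSS97 set-cover gadget itself and analyzes the resulting instance directly for the Manhattan distance over $\F_q$. Your route is more modular; the paper's is more self-contained but requires reopening the set-cover reduction.

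There is, however, a real gap in your argument. You assert
\[
\dman(\vt',\matA'\vs)=\sum_i\Big(|(\matA'\vs^{(1)})_i-t'_i|+\sum_{j\ge2}|(\matA'\vs^{(j)})_i|\Big),
\]
which would hold if $\dman$ were the $\ell_1$ metric on $\F_p^{mn}$. It is not: the paper defines $\Delta_{M,q}(\vx,\vy)=\sum_{i=1}^n|\tilde x_i-\tilde y_i|$ where $\tilde x_i=\sum_j \hat x_{i,j}\,p^{\,j-1}\in\{0,\dots,q-1\}$ is a \emph{single integer} per $\F_q$-coordinate (Definition~\ref{def:man1}). With $p=3$, $t'_i=2$, $(\matA'\vs^{(1)})_i=0$, $(\matA'\vs^{(2)})_i=1$, the true contribution is $|0+3\cdot1-2|=1$, not $|0-2|+|1|=3$; so your displayed equality fails, and with it the coordinate-wise inequality $\dman(\vt',\matA'\vs)\ge\dman(\vt',\matA'\vs^{(1)})$. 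The assertion that the $\F_q$-span creates no Manhattan-closer codeword to the subfield target is therefore not established.

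The repair is easy and preserves your outline. For the \No{} case you only need $\dman(\vt',\mathcal C_{\F_q})>\gapc L_0$, and Hamming distance already delivers it: since $t'_i\in\F_p$, any equality $(\matA'\vs)_i=t'_i$ in $\F_q$ forces $(\matA'\vs^{(1)})_i=t'_i$ in $\F_p$, so the $\F_q$-Hamming distance from $\vt'$ to $\matA'\vs$ is at least the $\F_p$-Hamming distance from $\vt'$ to $\matA'\vs^{(1)}\in\mathcal C_{\F_p}$. Combined with $\dman\ge\Delta$ this gives $\dman(\vt',\mathcal C_{\F_q})\ge\Delta(\vt',\mathcal C_{\F_p})>\gapc L_0$. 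The \Yes{} case is unaffected, and the stronger claim that the two minima coincide is not needed.
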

\noindent
The proof appears in the appendix.

The resemblance of large-alphabet BNCP to $q$-ary BDD is also apparent in the behavior of random lattices: one can check that just as random $q$-ary lattices have relatively long shortest vectors, similar bounds are satisfied w.r.t. the Manhattan/Lee distance for $\F_q$ random lattices, for any $q$.  We provide a formal statement w.r.t. the Manhattan distance.
\begin{lemma}\label{lem:random}
Let $A$ be a uniformly random $n\times k$ matrix over $\F_q$ where $n \geq k \log(q)$.
Then
$$
\P_A
\left(
\min_{\vx \neq \vy, \vx,\vy \in {\cal C}} \Delta_{M,q}(\vx, \vy)
\geq
q^{1-k/n}/2
\right)
\geq
1 - 2^{-n/200}
$$
\end{lemma}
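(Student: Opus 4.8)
Write $T:=q^{1-k/n}/2$, and for $\vc\in\F_q^n$ let $\|\vc\|=\sum_{i=1}^n w(c_i)$ be its Manhattan weight; here the coordinate weight $w:\F_q\to\Z_{\ge 0}$ satisfies $w(a)=0\iff a=0$ and has at most two letters of each positive weight (for non-translation-invariant variants of $\Delta_{M,q}$ one replaces $w$ by the Lee weight $\min\{a,\,q-a\}$, which lower-bounds the Manhattan one, and the argument below is unchanged). Since $\calC$ is linear, the quantity in the statement is $\min_{\vc\in\calC\setminus\{\vzero\}}\|\vc\|$. The plan is a first-moment/union bound over the nonzero coefficient vectors: writing $\calC=\{\matA\vs:\vs\in\F_q^k\}$, observe that for each fixed $\vs\ne\vzero$ the codeword $\matA\vs$ is uniform on $\F_q^n$ --- conditioning on all columns of $\matA$ except one whose $\vs$-coefficient is nonzero, the remaining column is mapped to $\matA\vs$ by a coordinatewise multiplication by a unit composed with a translation, a bijection of $\F_q^n$. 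Hence, with $\vc\sim\mathrm{Unif}(\F_q^n)$,
\[
\Pr_A\big[\,\exists\,\vc\in\calC\setminus\{\vzero\}:\ \|\vc\|<T\,\big]\ \le\ q^k\cdot\Pr_{\vc}\big[\,\textstyle\sum_i w(c_i)<T\,\big],
\]
so it suffices to bound the right-hand side by $2^{-n/200}$.

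This is a small-ball estimate for a sum of i.i.d.\ bounded variables, for which I would use the exponential Markov (Chernoff) bound: for every $\lambda>0$,
\[
\Pr_{\vc}\big[\,\textstyle\sum_i w(c_i)<T\,\big]\ \le\ e^{\lambda T}\phi(\lambda)^n,\qquad
\phi(\lambda):=\mathbb E\big[e^{-\lambda w(a)}\big]\ \le\ \frac1q\Big(1+\sum_{v\ge 1}2e^{-\lambda v}\Big)=\frac1q\cdot\frac{1+e^{-\lambda}}{1-e^{-\lambda}}\ \le\ \frac{4}{q\lambda},
\]
the last bound being valid for $0<\lambda\le 1$ since then $1-e^{-\lambda}\ge\lambda/2$. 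The crucial move is the choice $\lambda:=2e/T$: from the hypothesis $n\ge k\log q$ we get $k/n\le 1/\log q$, hence $T=q^{1-k/n}/2\ge q^{1-1/\log q}/2=\Theta(q)$, so $\lambda\le 1$ once $q$ exceeds an absolute constant $q_0$; and since $n\ge k\log q\ge\log q_0$, enlarging $q_0$ also forces $n\ge 6$. Plugging in, and using the identity $2T=q^{1-k/n}$ so that $(2T)^n=q^{\,n-k}$,
\[
q^k\cdot e^{\lambda T}\phi(\lambda)^n\ \le\ q^k\,e^{2e}\Big(\frac{4}{q\lambda}\Big)^n
= e^{2e}\Big(\frac{2T}{e}\Big)^n q^{k-n}
= e^{2e}\,\frac{(2T)^n}{e^n}\,q^{k-n}
= \frac{e^{2e}}{e^n}\ \le\ 2^{-n/200},
\]
the last inequality because $n\ge 6$ makes $n\big(1-\tfrac{\ln 2}{200}\big)\ge 2e$.

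For the remaining range $q<q_0$, the radius $T<q/2$ is a bounded constant, so $\{\vc:\|\vc\|<T\}$ has at most $\sum_{j\le T}\binom nj(2T)^j=\poly(n)$ elements, and hence $\Pr[\text{bad event}]\le\poly(n)\cdot q^{k-n}$; because $n\ge k\log q$ forces $n-k=\Omega(n)$ for $q\ge 3$, this is below $2^{-n/200}$ once $n$ passes a universal constant, while for $q=2$ one has $T\le 1$ and the bad event is empty. The finitely many pairs $(q,n)$ with both $q<q_0$ and $n$ small are then checked directly (where the target probability $1-2^{-n/200}$ is in any case weak).

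\textbf{Main obstacle.} The reduction and the union bound are routine; the real content is the small-ball estimate and, above all, making it uniform over the huge range of $q$ (from $2$ up to $2^{n/k}$). A plain volume count of $\{\vc:\|\vc\|<T\}$ is hopelessly lossy when $q$ is large --- there $T$ is of order $q$, so dropping the $\ell_1$ constraint over-counts enormously, and treating every nonzero letter as having weight $1$ (most have weight $\Theta(q)$) is equally wasteful --- which forces one through the tilted moment generating function; the identity $(2T)^n=q^{\,n-k}$ is precisely what then cancels the $q$-dependence and collapses the bound to $e^{-\Omega(n)}$.
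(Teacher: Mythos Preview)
Your high-level plan (union bound over the $q^k$ nonzero messages, uniformity of $\matA\vs$ for fixed $\vs\neq\vzero$, then a small-ball/Chernoff estimate) is exactly the paper's. Your MGF calculation is in fact tighter and cleaner than what the paper writes, and the cancellation via $(2T)^n=q^{\,n-k}$ is the right way to absorb the $q^k$ factor.

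The gap is in the very first sentence: ``Since $\calC$ is linear, the quantity in the statement is $\min_{\vc\in\calC\setminus\{\vzero\}}\|\vc\|$.'' For the paper's $\Delta_{M,q}$ this is false. The distance is defined by first pushing $\F_q$ to $\Z_q$ via the $p$-ary expansion $a\mapsto\tilde a$ and then taking $\sum_i|\tilde x_i-\tilde y_i|$; that map is \emph{not} an additive homomorphism when $q=p^m$ with $m>1$ (the additive group of $\F_q$ is $\F_p^m$, not $\Z_q$), so no $\Z_q$-metric pulled back through it is translation-invariant on $\F_q$. Concretely, in $\F_4$ one has $\Delta_{M,4}(2,1)=|2-1|=1$ while $2-1=3$ in $\F_4$, so $\|2-1\|_M=3$; the ratio can be as large as $q-1$. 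Your Lee-weight hedge does not repair this: the obstruction is the labeling map, not the choice of coordinate weight.

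The paper closes this gap with Proposition~\ref{prop:gap} (the ``gap-preserving property''): it first proves the \emph{stronger} weight bound $\min_{\vc\neq\vzero}\|\vc\|_M\ge n\cdot q^{1-k/n}/2$ w.h.p., and only then deduces the distance bound by the contrapositive of that proposition, losing a factor $n$. Your MGF argument, however, is tuned to $T=q^{1-k/n}/2$ precisely through the identity $(2T)^n=q^{\,n-k}$; if you replace $T$ by $nT$ the same computation yields $e^{2e}(n/e)^n$, which is useless. So to finish along your lines you would need a genuinely different small-ball estimate for the threshold $nT$ --- e.g.\ the paper's cruder route of bounding the \emph{count} of coordinates with $\tilde z_i\le q^{1-k/n}$ (each has probability $q^{-k/n}$) rather than the sum directly --- together with an invocation of Proposition~\ref{prop:gap}.
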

\noindent
The proof appears in the appendix.
Hence for a "typical" error-correcting code the shortest vector can be of length, say, $q^{0.99}$ even for linear rate codes ($k/n = 0.01$). Thus, one can define non-trivial $\eps$-BNCP on random ensembles for very small values of $\eps$, say $\eps = 1/\sqrt{q}$.
When allowing $q$ to grow with $n$ we thus achieve a setting that is similar to lattice problems used for cryptography.  In this analogy: $\eps$ corresponds to the security parameter of the instance (usually signifed by $\alpha$ for the Learning-with-Errors cryptosystem), and an instance is considered to be "hard", or "secure", for cryptographic purposes whenever $\eps \cdot q > \sqrt{n}$.

\subsection{Main Results}

Despite the apparent difficulty of this problem, we show, that surprisingly,
for a code ${\cal C} = [n,k,d]$ over $\F_q$,
for $q$ which is a prime power, namely $q = p^m$ for prime $p$, there exists an efficient quantum algorithm that solves BNCP for $\eps<1/(p \cdot n^2)$:
\begin{theorem}(sketch of Theorem \ref{thm:main2})

\noindent
\textbf{A Quantum Decoder for Prime-Power Fields}

\noindent
There exists a quantum algorithm that for any $q = p^m$, where $p$ is prime, solves $\eps$-BNCP on $\F_q$ w.r.t. the Manhattan distance for $\eps < 1/(pn^2)$ in time $\poly(n,p, \log(q))$.
\footnote{The dependency on $m$ is accounted for in the dependency on $\log(q)$.
We note that this algorithm can be easily adapted to the Lee metric by symmetrizing over the difference from $q$.}
\end{theorem}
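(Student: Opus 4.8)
The plan is to reduce $\eps$-BNCP over $\F_{p^m}$ to a bounded-distance decoding (BDD) problem on a structured integer lattice and then run a Regev-style quantum decoder on it. First I would pass from the field to the integers by writing each element of $\F_{p^m}$ in the power basis of $\F_p[\theta]$, which identifies $\F_{p^m}^n$ additively with $\Z_p^{mn}$ and turns the $\F_q$-linear code $\calC$ into an $\F_p$-linear code $\widetilde{\calC}$ of length $mn$ (carrying an extra ``multiply by $\theta$ within each block'' symmetry); introducing $p$-ary carry variables then lets one realise ``the integer value of a field vector'' as a linear map, so that the Manhattan distance on $\calC$ becomes the honest $\ell_1$ distance on an integer lattice $L$ of rank $O(mn)$ --- a lattice of $q$-ary flavour, in that it contains a scaled integer lattice and its dual $L^*$ admits an explicit basis. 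I would then verify that $\eps<1/(pn^2)$ leaves enough slack --- relative to $d$ and to the geometry of $L$ and $L^*$ --- for the decoding to be well posed and, more to the point, for the quantum step below to succeed.

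The quantum core is a Regev-style BDD algorithm on $L$. From the generator matrix $\matA$ one prepares a ``box-smeared'' superposition of $L$ centred at $\vt$ --- informally $\sum_{\vx\in L}\chi_B(\vx-\vt)\ket{\vx}$, where $B$ is an axis-aligned box of side $\approx \eps d$, the $\ell_\infty$-relaxation of the $\ell_1$-ball that contains the error --- and then applies the quantum Fourier transform over the ambient group $\Z_p^{O(mn)}$, which is efficient because $p$ is prime. The box indicator becomes a product of $\sinc$-type amplitudes on the dual side, and one ``folds'' the dual register modulo $L^*$ by Babai rounding against its explicit basis; this is where the $q$-ary structure is indispensable, as it lets the dual step be performed by an efficient classical routine rather than by a CVP oracle. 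The quantitative heart of the argument is that the box is wide enough --- equivalently, that the folded amplitudes concentrate on a single shortest dual-coset representative --- exactly in the regime $\eps<1/(pn^2)$.

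Measuring the resulting state (most naturally phrased as phase estimation against the shift operator $\vx\mapsto\vx+\vt$) then returns, with probability $1/\poly(n)$, the lattice point encoding $\matA\vs$ --- equivalently, the digit vector of the sought codeword --- from which $\vs$ is read off by linear algebra over $\F_q$; a $\poly(n)$-fold repetition with a consistency check against the parity-check matrix turns this into a Las Vegas procedure. The running time is $\poly(n,p,\log q)$: the factor $\log q=m\log p$ and the ambient dimension $O(mn)$ reflect the $m$ digit levels of the power-basis representation; the $\poly(p)$ factor comes from the QFT over $\Z_p$ and the carry bookkeeping; and the factor $p$ in $\eps<1/(pn^2)$ is precisely the loss incurred when passing from the $p$-ary alphabet, with its carries, to the $\ell_1$ geometry of $L$.

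The main obstacle I anticipate lies in the ``box wide enough'' step: one must prove that after folding modulo $L^*$ the amplitude genuinely collects on one coset representative --- i.e.\ that the $\sinc$ tails do not accumulate over the $\Theta(mn)$ coordinates and that the nearest dual-coset representative is unique --- using only the slack $\eps<1/(pn^2)$, and one must simultaneously argue that the measurement pins down $\matA\vs$ itself (injectivity of the relevant projection), not merely a coset of the dual code. The remaining difficulties are more technical than conceptual: engineering the carry-variable lattice $L$ so that its dual basis is genuinely explicit and its short vectors are understood, keeping the state preparation efficient without secretly invoking a decoder, and propagating the distance and error bounds faithfully through the field-to-ring reduction, where $\F_q$-multiplication is only $\Z_p$-linear.
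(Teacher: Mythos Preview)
Your proposal takes a very different route from the paper, and the route you sketch has a real gap that the paper's approach sidesteps entirely.

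The paper never passes to an integer lattice and never introduces carries. It works directly in $\F_q^n$ viewed additively as $\F_p^{mn}$, where addition is \emph{carry-free}. The central object is a ``phased cube state'' $\ket{\psi_{\hat\va}} = q^{-k/2}\sum_{\vc}\omega_p^{\hat\va\cdot\hat\vc}\,|C(\matA\vc)\rangle$, where the cube $|C(\vy)\rangle$ is the uniform superposition over $\vy+[\sigma]^n$ with $\sigma=p^r$. The key lemma is that if $\|\vD\|_M\le\sigma$ then $|C(\vy)\rangle=|C(\vy+\vD)\rangle$ \emph{exactly}: since $\sigma=p^r$, the set $[\sigma]=\{x:\hat x\in 0^{m-r}[p]^r\}$ is a subgroup of $(\F_p^m,+)$, hence closed under shifting by any $\vD$ supported on the low $r$ digits. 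Consequently $\ket{\psi_{\hat\va}}$ is an \emph{exact} eigenvector of $U_\vt$ with eigenvalue $\omega_p^{-\hat\va\cdot\hat\vs}$ whenever $\vt$ is $\sigma/n$-close to $\matA\vs$. One then samples $T=mk$ such states, obtains uniformly random labels $\hat\va_1,\ldots,\hat\va_T\in\F_p^T$ forming (with constant probability) an invertible matrix over $\F_p$, and reads off $\hat\vs$ by a single $\F_p^T$-QFT after controlled shifts. There is no tail analysis, no sinc, no dual-lattice folding: the exactness of the eigenvector property does all the work.

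Your plan goes in the opposite direction: you introduce carry variables to force the Manhattan distance to be an honest $\ell_1$ distance on an integer lattice, and then run an approximate Regev-style decoder. This discards precisely the structure that makes the problem tractable. Concretely: (i) once $L$ is a genuine integer lattice, the natural QFT is over $\Z_N^{O(mn)}$ for large $N$, not over $\Z_p^{O(mn)}$ as you write --- there is an inconsistency here; (ii) the ``fold modulo $L^*$ by Babai rounding'' step is the crux of Regev's reduction, and you give no reason why the carry-augmented lattice has a dual on which Babai succeeds to the required precision --- ``$q$-ary structure is indispensable'' is an assertion, not an argument; (iii) you yourself flag the sinc-tail accumulation over $\Theta(mn)$ coordinates as the main obstacle, and indeed it is: with only box-smearing and $\eps<1/(pn^2)$ there is no evident mechanism to make those tails vanish rather than merely be small. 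The paper avoids all three issues by staying in the group $\F_p^{mn}$ where the cube is a genuine subgroup coset and the eigenvalue equation holds on the nose.
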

We note that by a slight assumption on the distance to the lattice being an integer power of $p$ one can increase $\eps$ to $1/n^2$.
We note that for $q = 2$ the above does not provide a meaningful statement since the largest possible value for the minimal error correcting distance is at most $n$.  It is only for relatively large values of $q$, say $q = n^3$ that this approximation provides a non-trivial statement.

\subsection{Classical Algorithms}

\subsubsection{Direct Inversion}

Consider an error correcting code
 ${\cal C} \subseteq \F_q^n$ for $q = p^m$ for some integer $m$, and recall that each element of $\F_q$ can be regarded as an $m$-tuple of numbers in $\F_p$.
Given a target vector $\vy \in \F_q^n$ that is close to ${\cal C}$,
on may be tempted to think that for a sufficiently small noise level each coordinate of $\vy$, viewed as an $m$ dimensional vector in $\F_p^m$, has sufficiently many noise-free coordinates - namely the "most-significant" bits, that allow us to determine $\vy$'s closest vector precisely.
In other words, instead of solving the optimization problem:
$$
\min_{\vx \in {\cal C}} \Delta(\matA \vx, \vy)
$$
we solve the linear system of equations:
$$
\tilde \vy = \tilde \matA \cdot \tilde \vx
$$
where $\tilde \vy, \tilde \vx$ correspond to the top-bits in the representations of $\vx, \vy$ as vectors in $\F_p^m$, and $\tilde \matA$ is the corresponding $\F_p$ sub-matrix of $\matA$.

However, such a scheme fails immediately the test of invertibility: one can easily generate a matrix $\matA \in \F_q^{n\times n}$ that is invertible over $\F_q$,
yet regarding $\matA$ as an $m n \times m n$ linear operator over $\F_p$ and taking the submatrix $\matA_k$ corresponding to the top $k<m$ coordinates in each tuple results in a matrix that fails to be                                                                                                                                                                                                                                                                                                                                                                                                                                                                                                                                                                                                                                                                                                                                                                    invertible over $\F_p$. 
For example, considering $q = 4 p = 2$, and constructing $\F_4$ via the irreducible polynomial $x^2 + x + 1$ over $\F_2$ one can check that the matrix 
$$
\matA = 
\begin{bmatrix}
1 & 2 \\
3 & 0
\end{bmatrix}
$$
over $\F_4$
can be written as a linear operator over $\F_2^{4}$ as follows:
$$
\tilde\matA 
= 
\begin{bmatrix}
1   &0 	&0 	&1 \\
0   &1	&1	&1 \\
1	&1	&0	&0 \\
1	&0	&0	&0
\end{bmatrix}
$$
However, extracting the sub-matrix corresponding the top coordinate of each vector results in the following matrix:
$$
\tilde\matA_1 
= 
\begin{bmatrix}
1    	&0 	 \\
1		&0	 \\
\end{bmatrix}
$$
which is not invertible over $\F_2$.

It is plausible to hope that such examples are pathological, in the sense that they rarely appear for random codes.  Yet, even for random codes this problem is prevalent.
Let us consider concrete estimates:
according to Lemma \ref{lem:random} the typical minimal distance $d$ of a random $\F_q$ code is at least $q^{1-k/n}/2$.
If 
$$
\Delta \eps  < q^{1-k/n}/2
$$ 
then 
viewing each number $x\in \F_q$ as an $m$-dimensional vector $\vx \in \F_p^m$ we have that
the top
$m k / n$ $\F_p$ coordinates of each coordinate of the target vector $\vy\in \F_q^n$ are equal to the corresponding coordinates of some codeword $\vc \in {\cal C}$.
Hence, as above, we can
write a linear system of equations over $\F_p$:
$$
\tilde \vy = \tilde \matA \cdot \tilde \vx
$$
corresponding to the top coordinates in the $\F_p$ expansion of each $\F_q$ number.
By assumption of the random instance, the submatrix $\tilde \matA$ is in fact a random $mk \times mk$ matrix over $\F_p$, which is invertible with probability roughly $p^{-p}$,
i.e. independently of $n$.

In order to increase the probability that $\tilde \matA$ is invertible over $\F_p$
to nearly $1$, one would need to decrease the parameter $\eps$ controlling the relative distance to the lattice further so that
$$
\Delta \eps \sim q^{1-\beta k/n} 
$$
for $\beta = \Omega(\log(p))$ but this implies increasing the promise from $\eps$ to roughly $\eps^{\beta-1}$.

\subsubsection{Information-Set Decoding}

Other
classical attacks against McEliece that might also be relevant for BNCP include mainly variants of the Information-Set Decoding algorithm (see e.g. \cite{PS10}, and in the context of the Lee metric see more recent works in  \cite{CDE21, HW19}), but that algorithm's run-time scales exponentially in the rate $k$.
Since "good" codes, i.e. codes which have linear distance and linear rate are usually the target codes considered for both theoretic and practical applications, such algorithms are prohibitive.

\subsubsection{Summary}

Thus, to the best of our knowledge, efficient decoding w.r.t. the Manhattan distance is only available for random ensembles where the submatrix $\tilde \matA$ is invertible with overwhelming probability.
We compute below the quantum-classical separation for 
$p = 16$, $q^{k/n} = n^6$, $d = q \cdot n^{-6}$ and hence
$q^{\beta k/n} = n^{24}$:
\begin{center}

\textbf{Quantum-Classical Separation for Worst-Case/Average-Case instances of BNCP for $q = 16^m$.}
	
  \begin{tabular}{ c | c | c }
    \hline
    Worst-case & Quantum & Classical \\ \hline
    $1/n^2$      & $\poly(n)$ & $e^{\Omega(n)}$ \\ \hline
    $1/n^{18}$ & $\poly(n)$ & $e^{\Omega(n)}$  \\ \hline
  \end{tabular}
  \quad
    \begin{tabular}{ c | c | c }
    \hline
    Average-case  & Quantum & Classical \\ \hline
    $1/n^2$      & $\poly(n)$ & $e^{\Omega(n)}$ \\ \hline
    $1/n^{18}$ & $\poly(n)$ & $\poly(n)$  \\ \hline
  \end{tabular}
\end{center}

\subsection{Context on the Main Result}

Hard computational problems related to lattices and error correcting codes have  resisted efficient quantum algorithms for nearly two decades now, despite their underlying Abelian structure that presumably makes them more susceptible to such algorithms.
This resistance has given rise to the belief that quantum computers cannot outperform classical ones on problems that require any form of "bounded distance decoding" even with an inverse polynomial promise gap.  
Our main result suggest that this intuition may be false.

Notably, our result, as it is, does not directly pose a threat to any known public key crypto-system since its parameter range is quite different than those considered for established PQC systems \cite{NIST20}: for the main code-based PQC cryptosystems the alphabet size is constant with the block-size (see e.g. BIKE, HQC, Classic McEliece), whereas for lattice-based cryptosystems, 
namely descendants of the LWE cryptosystem \cite{Reg09, Pei09, BLPRS13},
where the alphabet size is in fact allowed to grow with the lattice dimension, the underlying algebraic structure is not a finite field but rather the ring of integers.

Yet, we believe that this parameter mismatch does not capture the full story: our work here suggests that using low-order QFT's to optimize over high-order groups, 
(in this case, prime-power fields) 
in conjunction with the recent construction of \cite{EH22} of approximate eigenvectors
of the vector shift operator, does in fact lead to an exponential quantum speed-up for lattice related problems.
We hope that further study of the approach outlined here will lead to additional discoveries in this field, classical or quantum.

\section{Preliminaries}

\subsection{Notation}

$\F_q$ denotes the field of order $q$, $\omega_p$ denotes the $p$-th root of unity.
For $\vx, \vy \in \F_q^n$ we use $\Delta_{M,q}(\vx, \vy)$ to denote the Manhattan distance between $\vx, \vy$ (see Definition \ref{def:man1}).
An $\F_q$ error-correcting code is denoted by ${\cal C} = [n,k,d] \subseteq \F_q^n$ where $n$ is the block-length, $k$ is the rate, and $d$ is the minimal {\it Manhattan} distance between any pair of codewords:
$$
d = \min_{\vx\neq \vy, \vx, \vy \in {\cal C}}
\Delta_{M,q}(\vx, \vy).
$$ 
A code ${\cal C}$ of rate $k$ is generated by a matrix $\matA \in \F_q^{n\times k}$.
Often we omit the subscript $q$ when it is clear from context.
Similarly, we use $\Delta_{\rm L}$ to signify the Lee distance.
For a subset $S \subseteq \F_q^n$
$$
\Delta(\vx, S)
$$
signifies the minimal distance between $\vx$ and any $\vy \in S$.
For $\vx \in \F_q^n$ let $U_\vx$ denote the shift operator:  $U_{\vx} \ket{\vy} = \ket{\vx+\vy}$.
The quantum Fourier Transform on the $n$-dimensional vector space (module) w.r.t. the ring of integers $\Z_p$ is denoted by ${\cal F}_p^n$.

\subsection{Vector Representation of Prime Power Fields}

Let $p$ be a prime number, and 
let $\F_q$ be a prime number field of order $q = p^m$ relative to some degree-$m$ irreducible polynomial $P \in \F_p[x]$:
$$
\F_q = F_p[x] / P
$$

For $a \in \F_q$, $q = p^m$, let $\hat{a} \in \F_{p}^m$ denote its $\F_p$ vector representation. Likewise for a vector $\va \in \F_q^k$ let $\hat{\va} \in \F_p^{m \cdot k}$ denote the concatenation of the $\F_p$ expansion of each of its coordinates. 
As an additive group, $\F_q$ is equal to the $m$-dimensional vector space over $\F_p$:
$$
\F_q = \F_p \times \F_p \times \hdots \times \F_p
$$
We assign $q$-ary labels $a\in \F_q$ to the elements of $\F_p^m$ as a $p$-ary expansion order:
\begin{align}
\widehat{\vzero} &=     
\underbrace{(0, 0, \hdots, 0, 0)}_{m \mbox{ coordinates}}
\nonumber \\ 
\widehat{\mathbf{1}} &= (0,0, \hdots, 0, 1) \nonumber \\
& \vdots \nonumber \\
\widehat{\mathbf{p-1}} &= (0,0, \hdots, 0, p-1) \nonumber \\
&\vdots \nonumber \\
\widehat{\mathbf{q-p}} &= (p-1,p-1,\hdots, p-1, 0) \nonumber \\
\widehat{\mathbf{q-p+1}} &= (p-1,p-1,\hdots, p-1, 1) \nonumber \\
& \vdots \nonumber \\
\widehat{\mathbf{q-1}} &= (p-1,p-1, \hdots, p-1, p-1) \nonumber
\end{align}
This corresponds to interpreting $\hat{a}$ as the $\F_p$ coefficient vector of the polynomial corresponding to $\hat{a}$:
$$
\hat{a}(x) = \sum_{i=1}^m \hat{a}_i \cdot x^{i-1}
$$
When we use the ordering $x<y$ on $x,y\in \F_q$ it means that $x<y$ as numbers in $\Z$.
If $\sigma = p^r$ and $x<\sigma$ we will often use the notation 
$$
\hat{x} \in 0^{n-r} [p]^r
$$
signifying that $x$'s representation as a $p$-ary vector has $0$ in the first $n-r$ positions (MSB).

Unless stated otherwise, for $x,y \in \F_q$ the expressions $x + y, x \cdot y$ denote addition / multiplication over $\F_q$.
The following proposition is immediately implied by definition:
\begin{proposition}\label{prop:1}

$$
\forall \vx, \vy \in \F_q^n
\quad
\widehat{\vx + \vy}
=
\hat{\vx} + \hat{\vy}
$$
where the addition in LHS is over $\F_q$ and the RHS addition is over $\F_{p}^m$.
\end{proposition}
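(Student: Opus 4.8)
The plan is to observe that the assignment $a \mapsto \hat a$ from $\F_q$ to $\F_p^m$ is, by the way it was just defined, an isomorphism of additive groups, and then to deduce the vector statement by applying this blockwise across the $n$ coordinates.

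First I would recall the setup: $\F_q = \F_p[x]/P$ for a degree-$m$ irreducible $P \in \F_p[x]$, and for $a \in \F_q$ the vector $\hat a \in \F_p^m$ is the coefficient vector of the unique polynomial $\hat a(x) = \sum_{i=1}^m \hat a_i x^{i-1}$ of degree at most $m-1$ representing $a$. The field addition on $\F_q$ is polynomial addition in $\F_p[x]$ followed by reduction modulo $P$. The one point to note is that reduction never triggers here: since $\hat a(x)$ and $\hat b(x)$ both have degree at most $m-1$, so does $\hat a(x)+\hat b(x)$, hence it is already the reduced representative of $a+b$. Its coefficient vector is the coordinatewise sum $\hat a + \hat b$ taken in $\F_p^m$, which gives $\widehat{a+b} = \hat a + \hat b$ for all scalars $a,b \in \F_q$. (Equivalently, this is just the statement that the coordinate-extraction map in the displayed $p$-ary ordering table is $\F_p$-linear.)

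Then I would lift this to $\F_q^n$. By definition $\hat{\vx}$ is the concatenation $(\hat x_1, \ldots, \hat x_n)$ of the $\F_p$-expansions of the coordinates of $\vx$, and addition of $\vx, \vy \in \F_q^n$ is performed coordinatewise in $\F_q$. Applying the scalar identity in each of the $n$ blocks yields
$$
\widehat{\vx+\vy} = (\widehat{x_1+y_1}, \ldots, \widehat{x_n+y_n}) = (\hat x_1 + \hat y_1, \ldots, \hat x_n + \hat y_n) = \hat{\vx} + \hat{\vy},
$$
where the final sum is the blockwise addition on $(\F_p^m)^n = \F_p^{m\cdot n}$, which is exactly the group operation referenced in the RHS of the proposition.

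I do not expect a genuine obstacle: the only step deserving a sentence is the remark that a sum of two degree-$(\le m-1)$ representatives needs no reduction modulo $P$, which is immediate since polynomial addition does not increase degree. Beyond that, the statement is a direct unpacking of the definition of the vector representation $\hat{\cdot}$.
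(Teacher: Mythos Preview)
Your argument is correct and is exactly the unpacking the paper has in mind: the paper simply states that the proposition ``is immediately implied by definition'' without giving a separate proof. Your observation that summing two degree-$(\le m-1)$ representatives never triggers reduction modulo $P$, followed by the coordinatewise extension to $\F_q^n$, is precisely that definition spelled out.
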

For example for $1\in \F_q$ and $3\in \F_q$, $q = 16 = 2^4$ we have that $1 + 3 = 2$.

\subsection{Extending The Manhattan Distance to Prime Power Fields}

The Manhattan distance was developed as an alternative to the Hamming distance for transmission of non-binary signals taken from some $q$-ary alphabet. 
The Manhattan distance 
$$
\Z_q^n \times \Z_q^n \to \Z^+
$$ 
is defined as follows:
$$
\forall \vx, \vy \in \Z_q^n,
\quad
\dman(\vx, \vy)
:=
\sum_{i=1}^n
|x_i - y_i|
$$
%The Manhattan weight of a vector of $\Z^n$ is its distance from $0$, so to adopt the Manhattan weight to the additive group modulo $q$, i.e. $\Z_q^n$ we treat the coordinate entries as integers.

In the context of linear codes, one considers the finite field $\F_q$ and not $\Z_q$.  These objects are quite different, and are equal only for prime $q$, yet in this study we consider $q$ which itself is a prime power, i.e. $q = p^m$.
To this end we define a mapping from $\F_q$ to the ring of integers $\Z_q$ using the natural $p$-ary expansion above:
\begin{definition}

\textbf{$p$-ary expansion mapping}

\noindent
For $a\in \F_q$, $q = p^m$ we define $\hat{a} \in \Z_q$ by writing $a$ as a vector $(\hat{a}_1,\hdots, \hat{a}_m) \in \F_p^m$, i.e. $\hat{a}_i \in \F_p$ and defining the polynomial
$$
\hat{a}(x)
=
\sum_{i=1}^{m} \hat{a}_i \cdot x^{i-1}
$$
we then set the $\Z_q$ representation of $a$, namely $\tilde{a}$, as the evaluation of the polynomial $\hat{a}(x)$ at point $x = p$:
$$
\tilde{a} = \hat{a}(p) = \sum_{i=1}^m \hat{a}_i p^{i-1}
$$
For $\vx \in \F_q^n$ we define $\tilde{\vx} \in \Z_q^n$ as applying the map above coordinate-wise.
\end{definition}
%We note that we use the same notation $\hat{x} = \hat{x}(x)$ for $x\in \F_q$ to signify both the $p$-ary expansion of $x$, and its value in $\Z$ using the power series above, as both views are equivalent.
We then extend the Manhattan distance to finite fields $\F_q$ by setting:
\begin{definition}\label{def:man1}

\textbf{Manhattan Distance for $\F_q$:}

\noindent
$$
\forall \vx, \vy \in \F_q^n
\quad
\Delta_{M,q}(\vx, \vy)
=
\Delta_{M}(\tilde{\vx}, \tilde{\vy}).
$$
and define the length of a vector $\vx \in \F_q^n$ as its distance from $0$:
$$
\| \vx \|_M = \Delta_{M,q}(\vx, 0)
=
\Delta_M(\tilde\vx, 0)
$$
\end{definition}

When we consider $\F_q$-linear codes we would like to consider the "minimal distance" of a code, or the distance of a given word from an $\F_q$ codespace, however, 
since the Manhattan distance is not a metric, then in particular
it is not shift invariant on $\Z_q$: for example, setting $q=p$ for prime $p$ and $n=1$ we have:
$$
p-1
=
\Delta_M(p-1,0) 
\neq
\Delta_M(p-1+1,0+1)
=
\Delta_M(0,1)
=1
$$
Similarly, the shift-invariance property does not hold for the distance ${\Delta}_{M,q}$.
However what one can show, is that specifically for the $p$-ary expansion mapping the following gap-presevation property does hold:
\begin{proposition}\label{prop:gap}

\textbf{Gap Preserving Property}

\noindent
For all $\vx, \vy \in \F_q^n$
if for some $r<m$
$$
\Delta_{M,q}(\vx, \vy) < p^r
$$
then
$$
\Delta_{M,q}(\vx- \vy,0)
\leq
n \cdot p^r
$$
In particular, if ${\cal C}\subseteq \F_q^n$
is such that for some $r<m$
$$
d = \min_{\vx\neq \vy, \vx, \vy \in {\cal C}}
\Delta_{M,q}(\vx, \vy) < p^r
$$
then the shortest vector $\vx$ of ${\cal C}$
satisfies:
$$
\| \vx\|_M <  n \cdot p^r
$$
\end{proposition}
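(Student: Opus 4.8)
The plan is to localize the estimate to a single coordinate, reconcile subtraction in $\F_q$ with ordinary subtraction of the integer representatives, and then sum the coordinates back up. Since $\Delta_{M,q}(\vx,\vy)=\sum_{i=1}^{n}|\tilde x_i-\tilde y_i|$, the hypothesis $\Delta_{M,q}(\vx,\vy)<p^r$ forces $|\tilde x_i-\tilde y_i|<p^r$ in every coordinate $i$, and $\|\vx-\vy\|_M=\Delta_{M,q}(\vx-\vy,0)=\sum_{i=1}^{n}\widetilde{x_i-y_i}$ splits as a sum over coordinates. Hence it is enough to prove the one-dimensional statement: if $x,y\in\F_q$ satisfy $|\tilde x-\tilde y|<p^r$ then $\widetilde{x-y}<p^r$; summing the $n$ bounds then gives $\|\vx-\vy\|_M<n\cdot p^r$. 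The ``in particular'' clause follows by applying this to two closest codewords $\va\neq\vb$ realizing $\Delta_{M,q}(\va,\vb)=d$, since $\va-\vb$ is then a nonzero codeword with $\|\va-\vb\|_M\le n\cdot p^r$.

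For the one-dimensional statement I would first record a ``borrow identity''. By Proposition~\ref{prop:1} we have $\widehat{x-y}=\hat x-\hat y$ with coordinatewise reduction modulo $p$, so the $j$-th $p$-ary digit of $x-y$ equals $\hat x_j-\hat y_j$ when $\hat x_j\ge\hat y_j$ and $\hat x_j-\hat y_j+p$ otherwise. Inserting this into $\widetilde{x-y}=\sum_j(\widehat{x-y})_j\,p^{j-1}$ and subtracting $\tilde x-\tilde y=\sum_j(\hat x_j-\hat y_j)p^{j-1}$ yields
\[
\widetilde{x-y}\;=\;(\tilde x-\tilde y)\;+\;p\sum_{j:\,\hat x_j<\hat y_j}p^{j-1}.
\]
In particular, if $x$ and $y$ agree in their top $m-r$ $p$-ary digits, then $x-y$ lies in the additive subgroup of elements whose integer representative is below $p^r$, every borrow occurs among the bottom $r$ digit positions, and $\widetilde{x-y}<p^r$ is immediate.

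The remaining step, and the one I expect to be the main obstacle, is to pass from ``$|\tilde x-\tilde y|<p^r$'' to ``$x$ and $y$ have the same top $m-r$ digits'', i.e.\ $\floor{\tilde x/p^r}=\floor{\tilde y/p^r}$. Morally a disagreement in a digit above position $r$ should cost at least $p^r$ in the integer metric, but the delicate case is when $\tilde x$ and $\tilde y$ straddle a multiple of $p^r$: there the top digits can change while $|\tilde x-\tilde y|$ is as small as $1$, so the claim cannot be pushed through coordinate by coordinate. I would therefore handle it globally: a coordinate whose integer representatives straddle a multiple of $p^r$ contributes at least $1$ to $\sum_i|\tilde x_i-\tilde y_i|<p^r$, so there are strictly fewer than $p^r$ such coordinates, and these few must be absorbed --- using the $\F_q$-linearity of ${\cal C}$ in the code statement, or, under the mild extra assumption flagged after the main theorem that the relevant distance is an exact power of $p$, ruling out straddling entirely. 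On the non-straddling coordinates the borrow identity already gives $\widetilde{x_i-y_i}<p^r$, and bounding the contribution of the straddling coordinates supplies the remaining slack so that the total stays within $n\cdot p^r$.
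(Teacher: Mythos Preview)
You correctly localize to single coordinates, your borrow identity is right, and you have put your finger on the real obstacle: from $|\tilde x-\tilde y|<p^r$ one cannot conclude $\lfloor\tilde x/p^r\rfloor=\lfloor\tilde y/p^r\rfloor$. For comparison, the paper's own proof glosses over this entirely --- it simply asserts that $|\widetilde{\vx_i}-\widetilde{\vy_i}|<p^r$ forces the top $m-r$ digits of $\widehat{\vx_i}$ and $\widehat{\vy_i}$ to agree, and proceeds from there --- so on this point you are being more careful than the paper.

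Your proposed patch, however, cannot be made to work, because the proposition as stated is actually false. A single straddling coordinate can contribute $q-1$, not merely something of order $p^r$, to $\|\vx-\vy\|_M$. Take $n=1$, any $m\ge 2$, any $1\le r<m$, and set $\tilde x=p^{m-1}-1$, $\tilde y=p^{m-1}$; then $\Delta_{M,q}(x,y)=1<p^r$, but $\hat x-\hat y$ equals $p-1$ in every digit, so $\widetilde{x-y}=q-1>n\cdot p^r$. Thus bounding the \emph{number} of straddling coordinates says nothing useful about their total contribution, and neither the $\F_q$-linearity of ${\cal C}$ nor the ``distance is a power of $p$'' remark bears on whether individual coordinates straddle a multiple of $p^r$. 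In short, both the paper's argument and yours share the same gap; you diagnosed it but cannot repair it without an additional hypothesis (for instance, that in each coordinate $\tilde x_i$ and $\tilde y_i$ lie in the same length-$p^r$ block, which is what the paper tacitly assumes).
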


\begin{proof}
If $\Delta_{M,q}(\vx, \vy) < p^r$ then 
$$
\forall i\in [n]
\quad
\Delta_M(\widetilde{\vx_i}, \widetilde{\vy_i}) < p^r
$$
This implies that for each $i\in [n]$ the respective $p$-ary expansions of $\widehat{\vx_i}, \widehat{\vy_i} \in \F_p^m$ are identical on indices $m,m-1,\hdots, r+1$.
Thus $\vz = \vx - \vy$ (subtraction over $\F_q$) is such that for each $i \in [n]$ the $p$-ary expansion of $\vz_i$, i.e. $\widehat{\vz_i}$ is $0$ for the top $m-r$ coordinates.
In particular $\widetilde{\vz_i} < p^r$ so 
$$
\| \vz \|_M 
\equiv
\sum_{i\in [n]} \widetilde{\vz_i}
< n \cdot p^r
$$

\end{proof}

We note that if one replaces the Manhattan distance with the Lee distance, which is in fact a metric on $\Z_q$ one obtains a metric space on $\F_q^n$ via the $p$-ary expansion mapping defined above, and under the partial order on elements of $\F_q$ defined above for the $p$-ary expansion.
This would imply, in particular that one would be able to improve the performance of the proposed quantum algorithm by a factor of $n$.
Still, we decided to develop this study using the Manhattan distance and not the Lee distance, losing the property of a metric space, for a more general statement.

\subsection{Invertibility of Random Matrices over Finite Fields}

The well-established theory of random matrices over finite fields characterizes the probability that a uniformly random matrix over a finite field is invertible as follows:
\begin{lemma}\label{lem:inv}

\textbf{Theorem 1.1 in \cite{M10}}

\noindent
Let $A\sim U[\F_p^{n \times k}]$.
There exists a constant $c>0$ such that:
$$
\P_{\matA}
\left(
\matA 
\mbox{ is invertible}
\right)
\geq
\prod_{k=1}^{\infty}
(1 - p^{-k}) 
- e^{-c T}
$$
\end{lemma}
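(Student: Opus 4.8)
The plan is to reprove this classical fact by an exact column-exposure computation and then read off the stated bound from a one-line tail estimate; the subtracted term $e^{-cT}$ is, in fact, not needed for a pure lower bound, but I would keep it to match the formulation of \cite{M10}. Throughout, set $T = \min(n,k)$ and read ``$\matA$ invertible'' as ``$\matA$ has rank $T$''. I would treat the square case $n = k$ first, where rank $n$ is the same as the columns $\vc_1,\dots,\vc_n$ of $\matA$ being linearly independent.

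First I would reveal the columns one at a time. Since the columns of $\matA$ are i.i.d.\ uniform on $\F_p^n$, conditioning on the event $E_{i-1}$ that $\vc_1,\dots,\vc_{i-1}$ are linearly independent leaves $\vc_i$ uniform on $\F_p^n$ and independent of $E_{i-1}$ (the event involves only $\vc_1,\dots,\vc_{i-1}$). A fixed $(i-1)$-dimensional subspace contains $p^{i-1}$ points, so $\Pr(\vc_i \notin \Span(\vc_1,\dots,\vc_{i-1}) \mid E_{i-1}) = 1 - p^{i-1}/p^n = 1 - p^{-(n-i+1)}$. Telescoping over $i = 1,\dots,n$ (equivalently, invoking $|GL_n(\F_p)| = \prod_{i=0}^{n-1}(p^n - p^i)$ and dividing by $p^{n^2}$) gives
$$
\Pr(\matA \text{ invertible}) = \prod_{j=1}^{n}(1 - p^{-j}).
$$

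Next I would pass to the tail. Every factor $1 - p^{-j}$ lies in $(0,1)$, so truncating the infinite product only increases its value: $\prod_{j=1}^{n}(1 - p^{-j}) \ge \prod_{j=1}^{\infty}(1 - p^{-j})$, which already gives the claim with $T = n$. For rectangular $\matA$ I would count ordered $T$-tuples of linearly independent vectors in $\F_p^{\max(n,k)}$, namely $\prod_{i=0}^{T-1}(p^{\max(n,k)} - p^i)$, divide by $p^{nk}$ (note $nk = T \cdot \max(n,k)$), and obtain $\Pr(\text{rank } T) = \prod_{j=\max(n,k)-T+1}^{\max(n,k)}(1 - p^{-j})$. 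This is a product of $T$ factors, each at least the corresponding factor $1 - p^{-j}$ for $j = 1,\dots,T$, because $1 - p^{-j}$ is increasing in $j$ and $\max(n,k) \ge T$; hence once again $\Pr \ge \prod_{j=1}^{\infty}(1 - p^{-j}) \ge \prod_{j=1}^{\infty}(1 - p^{-j}) - e^{-cT}$.

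The only point that needs genuine care is the conditional-probability bookkeeping: one must verify that conditioning on linear independence of the columns exposed so far does not disturb the uniformity of the next column (it does not, by mutual independence of the columns) and that the point count in a fixed subspace is exact. Beyond that, the tail estimate is a single line and needs no martingale or concentration machinery, so I do not anticipate a real obstacle; the lemma is essentially a restatement of the classical count of full-rank matrices over a finite field.
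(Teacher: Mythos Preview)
The paper does not prove this lemma at all; it is quoted verbatim as ``Theorem~1.1 in \cite{M10}'' and used as a black box in the proofs of Theorems~\ref{thm:main} and~\ref{thm:main2}. Your proposal therefore goes strictly beyond what the paper does: you supply a self-contained elementary derivation via the classical column-exposure count, whereas the paper simply cites the result. Your argument is correct. The exact formula $\Pr(\text{full rank}) = \prod_{j=\max(n,k)-T+1}^{\max(n,k)}(1-p^{-j})$ with $T=\min(n,k)$ is right, the monotonicity step to reach $\prod_{j=1}^{T}(1-p^{-j}) \ge \prod_{j=1}^{\infty}(1-p^{-j})$ is clean, and your observation that the subtracted $e^{-cT}$ is unnecessary for the bare lower bound is accurate (in the paper's application the matrix is square of side $T=mk$, so your square-case computation is already all that is needed). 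The cited result in \cite{M10} presumably addresses a more general or non-uniform entry distribution, which would explain the additive error term; for i.i.d.\ uniform entries your direct count is both simpler and sharper.
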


\section{Quantum PCS States on Finite Fields}

In \cite{EH22} the authors define the "Phased Coset State" (or PCS) on $q$-ary lattices as a certain superposition on the lattice, comprised of copies of a bounded function - each centered around an individual lattice point and multiplied by a phase that depends on that lattice point.
Here we redefine the PCS on finite fields:

\begin{definition}\label{def:pcs}

\noindent
\textbf{PCS on Finite Fields}

\noindent
For $\sigma \in \F_q$ define the set $\brs =\{0,\hdots, \sigma-1\} \subseteq \F_q$ and $\brs^n$ as the n-th fold product thereof.
  \begin{enumerate}
    \item Define the {\em cube state}
 anchored at a point $\vy \in \F_q^n$ by
 \[\cube{\vy} =
 \sigma^{-n/2} \cdot
   \sum_{\vz \in \brs^\dimin} \ket{\vy +\vz }.\]
\item 
Let ${\cal C} = [n,k,d] \subseteq \F_q^n$ with $q = p^m$.
The {\em phased cube state} with label $\widehat{\va} \in \F_p^{m \cdot k}$ is the following state:
 \[
   \pcs{\widehat{\va}}
 = 
 q^{-k/2}
 \cdot
 \sum_{\vc \in \F_q^k} 
 \omega_p^{\hat{\va} \cdot \hat{\vc}}
 \cube{\matG\vc }
 = 
 q^{-k/2} \cdot \sigma^{-n/2} \cdot
 \sum_{\vc \in \F_q^k} 
 \omega_p^{\hat{\va} \cdot \hat{\vc}}
\sum_{\vz\in \brs^\dimin} \ket{\matG\vc +\vz}.
\]
\end{enumerate}
\end{definition}
Note that the quantum state is defined on a register with numbers in $\F_q$ whereas the phase that multiplies each basis element is a power of the primitive root $\omega_p$.

\begin{lemma}[Cube state properties] \ \label{lem:cs-props}
  \begin{enumerate}
%    \item \label{lem:eff1} Given $\vy \in \F_q^n$, $\cube{\vy}\ket{\vy}$ is computable in time
%      $\poly(n, \log(q))$.

    \item\label{it:shift1}
      $\forall \vx,\vy \in \F_q^n, U_\vx \cube{\vy} = \cubep{\vx+ \vy}$,
      and the transformation $\cube{\vy}\ket{\vx}$ to
      $\cubep{\vx+\vy}\ket{\vx}$ is computable in time $\poly(n, \log q)$.

  \item Let $\cube{\vy}$ be a cube state of side length $\sigma = p^r$ for some $r>0, \vy \in \F_q^n$ and
    let $\vD\in \F_q^n$.
    \begin{enumerate}

    \item \label{it:dist-close}  
 If $\|\vD\|_{M,q} \leq \sigma$ then
 $\cube{\vy} = \cubep{\vy+\vD}$.
        
    \item \label{it:dist-far} If  $\|\vD\|_{M,q} > n \cdot \sigma$ then $\cubet{\vy}  \cubep{\vy+\vD}  =0$.
    \end{enumerate}
      
 \end{enumerate}
\end{lemma}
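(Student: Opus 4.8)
The plan is to reduce the entire lemma to a single structural observation: since $\sigma=p^r$ is a power of the field characteristic, the index set $\brs^n=\{0,\dots,p^r-1\}^n$ is an additive \emph{subgroup} $H$ of $(\F_q^n,+)$, and each cube state is nothing but the normalized uniform superposition supported on a coset of $H$, namely $\cube{\vy}=\sigma^{-n/2}\sum_{\vz\in H}\ket{\vy+\vz}$ is supported on $\vy+H$. That $H$ is a subgroup follows from Proposition~\ref{prop:1}: the $p$-ary expansion map is an additive isomorphism $\F_q^n\cong\F_p^{mn}$, under which $\brs$ corresponds to the subspace of $\F_p^m$ spanned by the $r$ least-significant coordinates, and the $n$-fold product of a subgroup is a subgroup. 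Two consequences are then immediate: (i) since distinct cosets of $H$ are disjoint as subsets of the computational basis, $\cube{\vy}=\cubep{\vy'}$ when $\vy-\vy'\in H$ and $\cubet{\vy}\cubep{\vy'}=0$ otherwise, so in all cases $\cubet{\vy}\cubep{\vy'}\in\{0,1\}$; and (ii) a difference vector $\vD$ lies in $H$ exactly when $\widetilde{\vD_i}<p^r$ for every $i\in[n]$, i.e. when the integer representation of each coordinate uses only the bottom $r$ of its $p$-ary digits.

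Granting this, the three items fall out directly. For item~\ref{it:shift1}, linearity of $U_\vx$ together with $U_\vx\ket{\vw}=\ket{\vx+\vw}$ gives $U_\vx\cube{\vy}=\sigma^{-n/2}\sum_{\vz\in\brs^n}\ket{\vx+\vy+\vz}=\cubep{\vx+\vy}$. For the complexity claim I would note that the two-register map $\ket{\vw}\ket{\vx}\mapsto\ket{\vw+\vx}\ket{\vx}$ is, for each fixed $\vx$, the permutation $\vw\mapsto\vw+\vx$ of $\F_q^n$, hence a well-defined unitary; it is implemented by coordinate-wise $\F_q$-addition, which over $\F_q=\F_p^m$ amounts to $mn$ modular additions at cost $\poly(\log p)$ apiece, for a total of $\poly(n,\log q)$, and on input $\cube{\vy}\ket{\vx}$ it outputs $\cubep{\vx+\vy}\ket{\vx}$.

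For item~\ref{it:dist-close}, the hypothesis $\|\vD\|_{M,q}=\sum_{i=1}^n\widetilde{\vD_i}\le\sigma=p^r$ forces $\widetilde{\vD_i}<p^r$ for each $i$, hence $\vD\in H$ by (ii) and $\cube{\vy}=\cubep{\vy+\vD}$ by (i). For item~\ref{it:dist-far} I would argue by contraposition: if $\cubet{\vy}\cubep{\vy+\vD}\neq0$ then $\vD\in H$ by (i), so $\widetilde{\vD_i}\le p^r-1$ for all $i$ by (ii), whence $\|\vD\|_{M,q}\le n(p^r-1)<n\sigma$; therefore $\|\vD\|_{M,q}>n\sigma$ forces the overlap to vanish.

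The argument is essentially bookkeeping, and the one place to be careful is the interface between the two views of $\F_q^n$ used in tandem: as the abelian group $\F_p^{mn}$, where cosets of $H$ and inner products of uniform superpositions live, and as the integer grid $\{0,\dots,q-1\}^n$ that defines $\Delta_{M,q}$. The failure of shift-invariance of the Manhattan distance, flagged earlier in the text, does no damage here: the cube states carry no phases and $H$ is an honest subgroup, so the Manhattan length enters only as a numerical certificate for or against membership in $H$. The single genuinely load-bearing hypothesis is $\sigma=p^r$ --- for a $\sigma$ that is not a power of $p$, the set $\{0,\dots,\sigma-1\}$ need not be closed under $\F_q$-addition, the coset description breaks down, and both sub-items of part~2 can fail.
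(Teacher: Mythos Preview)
Your argument is correct and is essentially the paper's own proof, recast in cleaner language: where the paper verifies coordinate by coordinate that $\vD_i+[\sigma]=[\sigma]$ (for item~\ref{it:dist-close}) and that a single large coordinate shifts $[\sigma]$ entirely off itself (for item~\ref{it:dist-far}), you observe once and for all that $[\sigma]^n$ is an additive subgroup $H$ of $\F_q^n$ when $\sigma=p^r$, so cube states are uniform coset superpositions and their pairwise inner products lie in $\{0,1\}$. The underlying content is identical --- closure of $0^{m-r}[p]^r$ under $\F_p$-addition --- and your subgroup/coset packaging simply makes the dichotomy immediate rather than computed.
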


Consider the implication of Item \ref{it:dist-close}: it implies that the PCS is not "periodic" on the code-space in the usual sense of having  symmetric support around each codeword (/lattice point).  
The function of symmetric support around each codeword is a different function which is the convolution of the Hamming ball and the code-space.
Rather, the support of the cube-shaped super-position starts at the point which is the original codeword with all the right-most $r$ coordinates erased.
Note that the "erased" information is  encoded in the phase that multiplies each cube.
For example, the cube anchored at a codeword $c$ such that $\hat{c} \in [p]^{n-r} 0^{r}$ is situated to the "bottom-right" of the codeword, whereas if $\hat{c} \in [p]^{n-r} 1^{r}$ it is situated on the "top-left" of the codeword.

\begin{proof}

\item{\textbf{Item:} \ref{it:shift1}}

\noindent
\begin{align}
\forall \vx \in \F_q^n \quad
U_\vx \cube{\vy} 
&= \sigma^{-n/2} 
\cdot
\sum_{\vz \in \brs^n}
U_\vx \ket{\vy+\vz } \\
&= \sigma^{-n/2}
\cdot
\sum_{\vz \in \brs^n} \ket{\vx +\vy+\vz} \\
&= \cubep{\vx+ \vy}
\end{align}
Therefore, given $\cube{\vy}\ket{\vx}$, one
addition from the
second register into the first register results in $\cubep{\vx+\vy}\ket{\vx}$.

\item{\textbf{Item:} \ref{it:dist-close}}

\noindent
Start with
$$
\Big\| 
      \cube{\vy}-\cubep{\vy + \vD} \Big\|^2 =
2 \cdot (1 - \Re(\cubet{\vy}\cubep{\vy + \vD} )).
$$
We have:
\begin{align}\label{eq:prod}
& \cubet{\vy} \cubep{\vy+\vD } 
= \cubet{\vzero}\cube{\vD}
        &\text{Item \mbox{\ref{it:shift1}} for the shift by $\vy$}
\end{align}
Since $\|\vD\|_{M} \leq \sigma$ then
$$
\forall i\in [n]
\quad
\widetilde{\vD_i} \leq \sigma.
$$
Hence, for each $i$ the set $\brs$ is invariant under shift by $\vD_i$:
\begin{align}
\vD_i + \brs 
&= 
\{ x + \vD_i, \ x\in \brs\} \\
&=
\{ x + \vD_i, \ \hat{x} \in 0^{n-r} [p]^r \} 
\quad
\mbox{By the assumption that $\sigma = p^r$}
\\
&=
\{ x + \vD_i, \ \widehat{x + \vD_i} + \widehat{\vD}_i \in 0^{n-r} [p]^r\} 
\quad
\mbox{By Proposition \ref{prop:1}}
\\
&=
\{ x, \ \hat{x} + \widehat{\vD_i} \in 0^{n-r} [p]^r\} 
\quad
\mbox{Re-indexing}
\\
&=
\{ x, \ \widehat{x} \in 0^{n-r} [p]^r\} 
\quad
\mbox{By the assumption that $\vD_i \leq \sigma$} \\
&=
\{ x, \ x \in [\sigma]\}  \\
&= 
\brs
\end{align}
It follows that:
$$
\cubet{\vzero}\cube{\vD} = 1
$$
Substituting in Equation \ref{eq:prod}
implies: 
$\big\| \cube{\vy} - \cubep{\vy+\vD} \big\|
=0$.

\item{\textbf{Item:} \ref{it:dist-far}}

\noindent
If $\|\vD\|_{M} \geq n \cdot\sigma+1$
there exists at least one coordinate $i\in [n]$ such that
$$
\widetilde{\vD_i} > 
\sigma
$$
in that case $\widehat{\vD_i} \notin 0^{n-r} [p]^r$,
and together with the assumption $\sigma = p^r$ we have:
$$
\forall x\in \brs
\quad
\hat{x} + \widehat{\vD_i} \notin 0^{n-r} [p]^r
$$
so
$$
\cubet{\vzero}\cube{\vD} = 0
$$
\end{proof}

We conclude from the lemma above that $\pcs{\widehat{\va}}$ is an eigenvector of $U_{\vt}$, for $\vt$ that is $\sigma$-close to a  word $\vs$ with eigenvalue $\omega_p^{-\hat{\va}\cdot \hat{\vs} }$.
\begin{lemma}\label{lem:ev}
Let $\pcs{\widehat{\va}}$ denote a PCS state with label $\widehat{\va} \in \F_p^{m \cdot k}$ and parameter $\sigma = p^r$ for integer $r<m$, and let $\vt \in \F_q^n$ such that
$$
\Delta_{M,q}(
\vt, 
\matA \vs)
\leq 
\sigma/n
$$
for some $\vs \in \F_q^k$.
Then
$$
U_{{\vt}} \pcs{\widehat{\va}} = \omega_p^{-\hat{\va} \cdot\hat{\vs}} \cdot 
\pcs{\widehat{\va}}.
$$
\end{lemma}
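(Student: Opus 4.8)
The plan is to split the shift $U_{\vt}$ as a shift by the nearby codeword composed with a shift by the residual error. Set $\vD:=\vt-\matA\vs\in\F_q^n$ (subtraction over $\F_q$), so that $\vt=\matA\vs+\vD$; since $\vx\mapsto U_{\vx}$ is a representation of the additive group $(\F_q^n,+)$ we have $U_{\vt}=U_{\matA\vs}\,U_{\vD}$. I would then establish separately that (i) $U_{\vD}$ fixes $\pcs{\widehat{\va}}$, and (ii) $U_{\matA\vs}$ merely permutes the cubes comprising $\pcs{\widehat{\va}}$, so that reindexing the defining sum reads off the scalar $\omega_p^{-\hat{\va}\cdot\hat{\vs}}$.

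For (i): the residual $\vD$ is short by the Gap Preserving Property (Proposition~\ref{prop:gap}) — closeness of $\vt$ and $\matA\vs$ in the Manhattan distance forces, coordinatewise, their base-$p$ expansions to agree in their top $m-r$ digits, so $\vD$ satisfies $\widehat{\vD_i}\in 0^{m-r}[p]^r$ for every $i$ and hence lies in the stabilizer of a cube of side $\sigma=p^r$. Item~\ref{it:dist-close} of Lemma~\ref{lem:cs-props} then gives $U_{\vD}\cube{\matA\vc}=\cubep{\matA\vc+\vD}=\cube{\matA\vc}$ for each $\vc\in\F_q^k$, and summing over $\vc$ against the phases $\omega_p^{\hat{\va}\cdot\hat{\vc}}$ yields $U_{\vD}\pcs{\widehat{\va}}=\pcs{\widehat{\va}}$; therefore $U_{\vt}\pcs{\widehat{\va}}=U_{\matA\vs}\pcs{\widehat{\va}}$.

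For (ii): by Item~\ref{it:shift1} of Lemma~\ref{lem:cs-props} and the $\F_q$-linearity of $\matA$, $U_{\matA\vs}\cube{\matA\vc}=\cubep{\matA\vc+\matA\vs}=\cube{\matA(\vc+\vs)}$, so $U_{\matA\vs}\pcs{\widehat{\va}}=q^{-k/2}\sum_{\vc\in\F_q^k}\omega_p^{\hat{\va}\cdot\hat{\vc}}\,\cube{\matA(\vc+\vs)}$. I would then substitute the bijection $\vc'=\vc+\vs$ of $\F_q^k$; Proposition~\ref{prop:1} gives $\widehat{\vc'-\vs}=\hat{\vc'}-\hat{\vs}$ in $\F_p^{mk}$, so the phase attached to $\cube{\matA\vc'}$ becomes $\omega_p^{\hat{\va}\cdot(\hat{\vc'}-\hat{\vs})}=\omega_p^{-\hat{\va}\cdot\hat{\vs}}\cdot\omega_p^{\hat{\va}\cdot\hat{\vc'}}$. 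Pulling the $\vc'$-independent factor out of the sum leaves exactly $\omega_p^{-\hat{\va}\cdot\hat{\vs}}\,\pcs{\widehat{\va}}$, which is the claim.

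The step I expect to be the main obstacle is (i): the Manhattan distance on $\F_q^n$ is not shift invariant (and, unlike the Lee distance, fails to be a metric), so the statement ``$\vt$ lies within $\sigma/n$ of $\matA\vs$'' does not immediately imply that the $\F_q$-residual $\vD=\vt-\matA\vs$ has small Manhattan length — it is precisely here that the factor-$n$ slack in the hypothesis is consumed, through the digit-agreement argument of Proposition~\ref{prop:gap} (and with the mild power-of-$p$ caveat noted after the main theorem, one can tighten the bookkeeping). Everything else — the group-representation identity $U_{\vt}=U_{\matA\vs}U_{\vD}$, the homomorphism property of the $p$-ary expansion, and the reindexing of the sum — is routine.
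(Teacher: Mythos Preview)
Your proposal is correct and follows essentially the same approach as the paper: decompose $U_{\vt}=U_{\matA\vs}U_{\vD}$, use Proposition~\ref{prop:gap} to control $\vD$, invoke Item~\ref{it:dist-close} of Lemma~\ref{lem:cs-props} for the residual shift, then reindex the sum via Proposition~\ref{prop:1} to extract the phase. The only cosmetic difference is that the paper applies $U_{\matA\vs}$ before $U_{\vD}$ rather than after, which is immaterial since the shifts commute; your identification of step~(i) as the place where the factor-$n$ slack is spent (via the non-shift-invariance of $\Delta_{M,q}$) is exactly the point.
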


\begin{proof}

Since $\Delta_{M,q}(\vt, \matA \vs) \leq \sigma/n$, for $\sigma = p^r$, $r<m$ then by Proposition \ref{prop:gap} we can write:
$$
\vt = \matA \vs + \vD
$$
where $\| \vD\|_{M,q} \leq \sigma$.
Therefore
\begin{align}
U_{\vt} \pcs{\widehat{\va}}
&=
q^{-k/2}
\cdot
U_{\vt} 
\cdot
\sum_{\vc \in \F_q^k}
\omega_p^{\hat{\va} \hat{\vc}}
\cube{\matA \vc} \\
&=
q^{-k/2}
\cdot
U_{\vD}
\cdot
U_{\matA \vs}
\cdot
\sum_{\vc \in \F_q^k}
\omega_p^{\hat{\va} \hat{\vc}}
\cube{\matA \vc} \\
&=
q^{-k/2}
\cdot
U_{\vD}
\cdot
\sum_{\vc \in \F_q^k}
\omega_p^{\hat{\va} \hat{\vc}}
\cube{\matA \vc + \matA \vs} 
\mbox{ definition of shift over $\F_q$}
\\
&=
q^{-k/2}
\cdot
U_{\vD}
\cdot
\sum_{\vc \in \F_q^k}
\omega_p^{\hat{\va} \hat{\vc}}
\cube{\matA (\vc + \vs)} 
\mbox{ linearity over $\F_q$}
\\
&=
q^{-k/2}
\cdot
U_{\vD}
\cdot
\omega_p^{-\hat{\va} \hat{\vs}}
\sum_{\vc \in \F_q^k}
\omega_p^{\hat{\va} (\hat{\vc} + \hat{\vs})}
\cube{\matA (\vc + \vs)} \\
&=
q^{-k/2}
\cdot
U_{\vD}
\cdot
\omega_p^{-\hat{\va} \hat{\vs}}
\sum_{\vc \in \F_q^k}
\omega_p^{\hat{\va} \cdot \widehat{\vc + \vs}}
\cube{\matA (\vc + \vs)} 
\quad
\mbox{Proposition \ref{prop:1}}
\\
&=
\omega_p^{-\hat{\va} \hat{\vs}}
\cdot
q^{-k/2}
\cdot
U_{\vD}
\sum_{\vc \in \F_q^k}
\omega_p^{\hat{\va} \hat{\vc}}
\cube{\matA \vc} 
\quad
\mbox{Re-indexing $c+s \to c$}
\\
&=
\omega_p^{-\hat{\va} \hat{\vs}}
\cdot
q^{-k/2}
\sum_{\vc \in \F_q^k}
\omega_p^{\hat{\va} \hat{\vc}}
\cube{\matA \vc}
\quad
\mbox{Item \ref{it:dist-close} since $\|\vD\|_{M,q} \leq \sigma$} \\
&=
\omega_p^{-\hat{\va} \hat{\vs}}
\pcs{\widehat{\va}}
\end{align}

\end{proof}

We now show an efficient algorithm for sampling a PCS state $\pcs{\hat\va}$ for random $\hat\va$:
\begin{lemma}\label{lem:pcsgen}

\textbf{An efficient quantum PCS sampler}

\noindent
Let ${\cal C} = [n,k,d]$ be a code of $\F_q^n$ generated by matrix $\matA \in \F_q^{n \times k}$,
$q = p^m$,i.e.
$$
d = 
\min_{\vx\neq \vy, \vx, \vy \in {\cal }} \Delta_{M,q}(\vx, \vy)
$$
There exists a quantum algorithm that samples $\pcs{\hat{\va}}$ for $\hat{\va} \sim 
U\left[\F_p^{m \cdot k}\right]$ in time $\poly(n, \log(q))$, whenever $\sigma < d/n$.

\end{lemma}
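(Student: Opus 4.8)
The plan is to build the PCS state by the standard "prepare a uniform superposition, then apply the Fourier transform" trick, using the eigenvector structure established in Lemma~\ref{lem:ev}. Concretely, I would start from the state $q^{-k/2}\sum_{\vc\in\F_q^k}\ket{\vc}\ket{\vzero}$ on two registers (the first holding a label in $\F_p^{mk}$, the second a vector in $\F_q^n$), which is preparable by applying $\mathcal F_p^{mk}$ to $\ket{\vzero}$, or simply by Hadamard-type gates over $\Z_p$ on each of the $mk$ coordinates. Into the second register I would write the cube state anchored at $\matA\vc$: using Item~\ref{it:shift1} of Lemma~\ref{lem:cs-props} (and the fact that $\cube{\vzero}$ is preparable as a uniform superposition over $\brs^n$ since $\sigma=p^r<d/n$ is a power of $p$ and $\brs^n\subseteq\F_q^n$ is an explicit set), the map $\ket{\vc}\cube{\vzero}\mapsto\ket{\vc}\cube{\matA\vc}$ is computed by first preparing $\cube{\vzero}$ in the second register and then performing the controlled shift $U_{\matA\vc}$, which is just the reversible addition of $\matA\vc$ (computed from $\vc$ over $\F_q$) into the second register. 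This yields $q^{-k/2}\sum_{\vc}\ket{\vc}\cube{\matA\vc}$ in time $\poly(n,\log q)$.

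Next I would apply the quantum Fourier transform $\mathcal F_p^{mk}$ over $\Z_p$ to the first register. Since $\widehat{(\cdot)}$ identifies $\F_q^k$ additively with $\F_p^{mk}$ (Proposition~\ref{prop:1}), the state $q^{-k/2}\sum_{\vc}\ket{\vc}\cube{\matA\vc}$ is exactly the state one gets by summing $\ket{\widehat\vc}$ against the trivial character, so applying $\mathcal F_p^{mk}$ maps the first register to a uniform superposition over labels $\widehat\va\in\F_p^{mk}$ with the appropriate cross-phases $\omega_p^{\widehat\va\cdot\widehat\vc}$; after the transform the two registers are precisely
$$
q^{-k/2}\sum_{\widehat\va\in\F_p^{mk}}\ket{\widehat\va}\Big(q^{-k/2}\sum_{\vc\in\F_q^k}\omega_p^{\widehat\va\cdot\widehat\vc}\cube{\matA\vc}\Big)
=
q^{-k/2}\sum_{\widehat\va}\ket{\widehat\va}\,\pcs{\widehat\va}.
$$
Measuring the first register then yields a uniformly random $\widehat\va\sim U[\F_p^{mk}]$ together with the collapsed second register $\pcs{\widehat\va}$, as desired. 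Each step — preparing the cube state, the controlled $\F_q$-multiplication and addition, and the QFT over $\Z_p$ on $mk=O(\log q)$-dimensional register — runs in time $\poly(n,\log q)$.

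The one point that needs care is that $\pcs{\widehat\va}$ as written is a superposition of cube states that may \emph{overlap} or even coincide (by Item~\ref{it:dist-close}, distinct $\matA\vc$ differing by a short vector give the same cube), so one must check that the algebraic identity $\mathcal F_p^{mk}$ applied to $q^{-k/2}\sum_{\vc}\ket{\vc}\cube{\matA\vc}$ genuinely produces the normalized family $\{\pcs{\widehat\va}\}$ with the claimed measurement distribution — i.e.\ that the post-measurement state is $\pcs{\widehat\va}$ up to global phase and that its norm is accounted for correctly. This follows because the map $\vc\mapsto\cube{\matA\vc}$ is a fixed (not necessarily injective) assignment and the transform acts only on the label register, so the joint state is literally $q^{-k/2}\sum_{\widehat\va}\ket{\widehat\va}\pcs{\widehat\va}$ by definition of $\pcs{\widehat\va}$; no normalization subtlety arises because we never renormalize the $\pcs{\widehat\va}$'s individually before measuring. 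Thus the main (mild) obstacle is bookkeeping the identification $\F_q^k\cong\F_p^{mk}$ consistently between the QFT and the phase convention, together with confirming that $\sigma<d/n$ is exactly the hypothesis under which the earlier cube-state lemmas apply so that the output is a bona fide PCS; the actual gate-count estimates are routine.
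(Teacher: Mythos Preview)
Your algorithm is the same as the paper's, and the chain of unitaries you describe is correct. The gap is in the last paragraph, where you argue uniformity of the measurement outcome.

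You write that ``no normalization subtlety arises because we never renormalize the $\pcs{\widehat\va}$'s individually before measuring.'' This is backwards. The probability of observing label $\widehat\va$ when you measure the first register of $q^{-k/2}\sum_{\widehat\va}\ket{\widehat\va}\pcs{\widehat\va}$ is exactly $q^{-k}\|\pcs{\widehat\va}\|^2$, so uniformity holds \emph{only if} every $\pcs{\widehat\va}$ has unit norm. If the cubes $\cube{\matA\vc}$ genuinely overlapped (as you allow when you cite Item~\ref{it:dist-close}), then $\|\pcs{\widehat\va}\|^2 = q^{-k}\sum_{\vc,\vc'}\omega_p^{\widehat\va\cdot(\widehat\vc-\widehat{\vc'})}\cubet{\matA\vc'}\cube{\matA\vc}$ would pick up off-diagonal terms and would depend on $\widehat\va$; the distribution would \emph{not} be uniform, and the post-measurement state would not be the intended (unit-norm) PCS.

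What actually saves the argument is precisely the hypothesis $\sigma<d/n$ that you mention only in passing at the end. For any two distinct $\vc,\vc'\in\F_q^k$ the codeword difference $\matA(\vc-\vc')$ has $\|\cdot\|_{M,q}$-length at least $d>n\sigma$, so Item~\ref{it:dist-far} (not Item~\ref{it:dist-close}) gives $\cubet{\matA\vc}\cube{\matA\vc'}=0$. Thus the family $\{\cube{\matA\vc}\}_{\vc\in\F_q^k}$ is orthonormal, every $\pcs{\widehat\va}$ has norm $1$, and $\Pr(\widehat\va)=q^{-k}$. This is the step the paper makes explicit; your proposal identifies the right place to worry but then resolves it with a non-argument instead of invoking orthogonality of the cubes.
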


\begin{proof}

Consider the following evolution according to the computational steps specified in each equation:
\begin{align}
\ket{\vzero}_1 \otimes \ket{\vzero}_2
&\to
\ket{\vzero}_1 
\otimes
\cube{\vzero}_2 
\quad
\mbox{QFT: } I \otimes {\cal F}_\sigma^n
\\
&\to
q^{-k/2}
\cdot
\sum_{\vc \in \F_q^k} \ket{\vc} 
\otimes
\cube{\vzero} 
\quad
\mbox{QFT: }{\cal F}_q^k \otimes I
\\
&\to
q^{-k/2}
\cdot
\sum_{\vc \in \F_q^k} \ket{\vc} 
\otimes
\cube{\matA\vc} 
\quad
\mbox{ controlled shift by } \matA \vc
\\
&\to
q^{-k/2}
\cdot
\sum_{\vc \in \F_q^k} \ket{\hat{\vc}} 
\otimes
\cube{\matA\vc} 
\quad
\mbox{Change rep.: }
\F_q^k \mbox{ to } \F_p^{m \cdot k}
\\
&\to
q^{-k/2}
\cdot
\sum_{\vc \in \F_q^k} 
\left(
q^{-k/2}
\cdot
\sum_{\hat{\va} \in \F_p^{mk}}
\omega_p^{\hat{\va} \hat{\vc}}
\ket{\hat{\va}}
\right) 
\otimes
\cube{\matA \vc} 
\quad
\mbox{QFT: }
{\cal F}_p^{m \cdot k}\otimes I
\\
&=
q^{-k}
\sum_{\hat{\va} \in \F_p^{mk}}
\ket{\hat{\va}}
\otimes
\left(
\sum_{\vc \in \F_q^k}
\omega_p^{\hat{\va} \hat{\vc}}
\cube{\matA \vc}
\right)
\end{align}
By definition we have $$
\forall \vc \in \F_q^k
\quad
\| \matA \vc \|_M 
\equiv
\Delta_{M,q}(\matA \vc, 0)
\geq
d > \sigma \cdot n.
$$
Then by Item \ref{it:dist-far} it follows that the set $\left\{ \cube{\matA \vc} \right\}_{c\in \F_q^k}$ forms an orthonormal set.  Hence
$$
\forall \hat{\va}\in \F_p^{mk} \quad
\P(\widehat{\va})
=
q^{-k}
$$
which is independent of $\widehat{\va}$, i.e. $\hat{a}$ is sampled uniformly from $\Z_p^{mk}$.
The running time of the procedure is determined by the complexity of the Fourier transform over $\F_p^{m k}$, which is at most
$$
\log(p) \cdot m \cdot k = \poly(n, \log(q)).
$$

\end{proof}

%
%We note that the constraint on the side length parameter $\sigma$ is only used to argue uniformity.  In fact, if one is satisfied with arbitrary distributions, one may use $\sigma$ that generates overlapping support between cubes.  
%At the extreme case of choosing $\sigma = q n / 2$ (for $p=2$) the resulting state is the $0$-th PCS, which in this case is the uniform superposition over $\F_q^n$, a state which is independent of the input ${\cal C}$.

\section{An Algorithm for BNCP for Prime-Power Fields}

We now define the following quantum bounded-distance decoder:
We first define the algorithm in terms of $q = 2^m$ for simplicity of exposition, and later we'll generalize it to any $q = p^m$ for prime $p$:
\begin{algorithm}\label{alg:main}

\textbf{A Quantum Decoder for Finite Field BNCP}

\noindent
\textbf{Input:}
$(\matA \in \F_q^{n \times k}, \vt \in \F_q^n)$, $q = 2^m$, and parameter $\sigma>0$.

\begin{enumerate}

\item\label{it:sample}
Sample $T = k \cdot m$ quantum PCS states with parameter $\sigma$:
$$
\pcs{\widehat{\va_1}} \otimes \hdots \otimes \pcs{\widehat{\va_T}}
$$
\item
Let $\hat\matA$ denote the matrix whose columns are the labels of the sampled PCS states:
$$
\hat\matA = [\widehat{\va_1}, \hdots, \widehat{\va_T}]
$$
Assume w.l.o.g. that $\hat\matA$ is invertible over $\F_2$.
\item
Tensor with the uniform superposition
$$
q^{-k/2} \cdot \sum_{\vz \in \F_q^k} \ket{\widehat{\vz}}
$$ 
\item
Apply $\hat\matA^{-1}$ to the register:
$$
q^{-k/2} \cdot \sum_{\vz \in \F_q^k} \ket{\hat\matA^{-1} \cdot \widehat{\vz}}
$$ 
\item
Apply a controlled-shift operation where bit $j \in [T]$ of $\hat\matA^{-1} \widehat{\vz} \in \F_2^T$ controls whether or not we apply $U_{\vt}$ to the $j$-th PCS state:
$$
q^{-k/2} \cdot 
\sum_{\vz \in \F_q^k} 
\ket{\hat\matA^{-1} \cdot \widehat{\vz}}
\otimes
U_{\vt}^{(\hat\matA^{-1} \widehat{\vz})_1}\pcs{\widehat{\va_1}}
\otimes
\hdots
\otimes
U_{\vt}^{(\hat\matA^{-1} \widehat{\vz})_T}\pcs{\widehat{\va_T}}
$$
\item\label{it:5}
Apply $\hat\matA$ to the first register:
$$
q^{-k/2} \cdot \sum_{\vz \in \F_q^k} \ket{\widehat{\vz}}
\otimes
U_{\vt}^{(\hat\matA^{-1} \widehat{\vz})_1}\pcs{\widehat{\va_1}}
\otimes
\hdots
\otimes
U_{\vt}^{(\hat\matA^{-1} \widehat{\vz})_T}\pcs{\widehat{\va_T}}
$$
\item
Apply the $\F_2^T$ quantum Fourier transform on the first register, and measure in the standard basis.
Denote as output ${\cal O}$.
\end{enumerate}

\end{algorithm}

Using this algorithm we solve an instance of $\eps$-BNCP to factor $\eps = 1/(2n^2)$.
\begin{theorem}\label{thm:main}
Let ${\cal C} = [n,k,d]$ be an error correcting code over $\F_q^n$ for $q=p^m$, $p=2$.
Let $(\matA \in \F_q^{n\times k}, \vt\in \F_q^n)$ 
be an instance of $\eps$-BNCP where
$$
\Delta(
\vt , \matA \vs )
\leq d/(2n^2)
$$
for some $\vs \in \F_q^k$.
Then upon input $(\matA, \vt)$ and  parameter $\sigma = 2^r, r<m$ that satisfies:
$$
(\ast)
\quad
d/(2n) 
\leq
\sigma
<
d/n
$$   
Algorithm \ref{alg:main} runs in expected time $\poly(n, \log(q))$ and returns
an outcome ${\cal O} = \hat{\vs}$.
\end{theorem}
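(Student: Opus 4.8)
The plan is to run Algorithm \ref{alg:main} verbatim, track the global state exactly (there is no approximation error at any step), and show that the first register ends in the basis state $\ket{\widehat\vs}$. The first thing to reconcile is the two hypotheses on $\sigma$: the upper bound in $(\ast)$, $\sigma<d/n$, is precisely the hypothesis of the PCS sampler (Lemma \ref{lem:pcsgen}), while the lower bound $\sigma\ge d/(2n)$ gives $\sigma/n\ge d/(2n^2)\ge\Delta_{M,q}(\vt,\matA\vs)$, which is precisely the hypothesis of the eigenvector lemma (Lemma \ref{lem:ev}). Such a $\sigma=2^r$ with $r<m$ exists because $[d/(2n),d/n)$ has multiplicative width $2$ and $d\le n(q-1)<n\cdot 2^m$ forces $d/n<2^m$; in any case it is supplied as input.

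For the randomness and running time: by Lemma \ref{lem:pcsgen}, Step \ref{it:sample} produces each $\pcs{\widehat{\va_i}}$ in time $\poly(n,\log q)$ with label $\widehat{\va_i}$ uniform and independent over $\F_2^{mk}$, so $\hat\matA=[\widehat{\va_1},\dots,\widehat{\va_T}]$ with $T=km$ is a uniform random $(mk)\times(mk)$ matrix over $\F_2$, hence invertible with probability at least an absolute constant $c_0>0$ (Lemma \ref{lem:inv}). Repeating Step \ref{it:sample} until $\hat\matA$ is invertible costs $O(1)$ rounds in expectation, and all other operations (the $\F_2$ linear algebra $\hat\matA^{\pm 1}$, the controlled shifts $U_\vt$, and the $\F_2^T$ Fourier transform) cost $\poly(n,\log q)$, so the total expected running time is $\poly(n,\log q)$.

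The heart of the proof is the phase bookkeeping through Steps 5--6. By Lemma \ref{lem:ev}, $U_\vt\pcs{\widehat{\va_j}}=\omega_2^{-\widehat{\va_j}\cdot\widehat\vs}\pcs{\widehat{\va_j}}=(-1)^{\widehat{\va_j}\cdot\widehat\vs}\pcs{\widehat{\va_j}}$, so each $U_\vt$ acts as a global scalar: the PCS tensor factor is never altered, and the state remains a product of the first register with $\pcs{\widehat{\va_1}}\otimes\cdots\otimes\pcs{\widehat{\va_T}}$. Viewing the first register after Step 4 as holding $\vw=\hat\matA^{-1}\widehat\vz$ (which ranges over all of $\F_2^{mk}$ as $\vz$ ranges over $\F_q^k$), the controlled shifts of Step 5 multiply the $\vw$-branch by $\prod_{j=1}^T(-1)^{\vw_j(\widehat{\va_j}\cdot\widehat\vs)}=(-1)^{\langle\vw,\hat\matA^{\top}\widehat\vs\rangle}$, using that $\widehat{\va_j}$ is the $j$-th column of $\hat\matA$, so the vector with $j$-th entry $\widehat{\va_j}\cdot\widehat\vs$ equals $\hat\matA^{\top}\widehat\vs$. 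Re-applying $\hat\matA$ in Step \ref{it:5} maps $\vw\mapsto\widehat\vz=\hat\matA\vw$, so on the $\widehat\vz$-branch the phase collapses to $(-1)^{\langle\hat\matA^{-1}\widehat\vz,\hat\matA^{\top}\widehat\vs\rangle}=(-1)^{\langle\widehat\vz,\widehat\vs\rangle}$. The state is therefore exactly $\bigl(q^{-k/2}\sum_{\vz\in\F_q^k}(-1)^{\widehat\vz\cdot\widehat\vs}\ket{\widehat\vz}\bigr)\otimes\pcs{\widehat{\va_1}}\otimes\cdots\otimes\pcs{\widehat{\va_T}}$.

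Finally, the first register is $2^{-mk/2}\sum_{\vw\in\F_2^{mk}}(-1)^{\vw\cdot\widehat\vs}\ket{\vw}={\cal F}_2^{mk}\ket{\widehat\vs}$, so the $\F_2^T$ Fourier transform of Step 7 ($T=mk$; a tensor of Hadamards, self-inverse) returns $\ket{\widehat\vs}$ and the final measurement outputs ${\cal O}=\widehat\vs=\hat\vs$ with probability $1$. I do not expect a serious obstacle; the only delicate points are the phase-collapse identity $\langle\hat\matA^{-1}\widehat\vz,\hat\matA^{\top}\widehat\vs\rangle=\langle\widehat\vz,\widehat\vs\rangle$ and the verification that the single window $(\ast)$ meets both Lemma \ref{lem:pcsgen} and Lemma \ref{lem:ev} at once --- which is exactly why the multiplicative-width-$2$ window, and hence the factor $1/(2n^2)$, appears. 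Everything else is routine.
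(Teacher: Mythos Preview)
Your proposal is correct and follows essentially the same approach as the paper's own proof: invoke Lemma \ref{lem:pcsgen} via $\sigma<d/n$ to get uniform labels and hence constant invertibility probability for $\hat\matA$, invoke Lemma \ref{lem:ev} via $\sigma\ge n\cdot\Delta_{M,q}(\vt,\matA\vs)$ to turn each $U_\vt$ into a scalar phase, collapse the accumulated phase to $(-1)^{\widehat\vz\cdot\widehat\vs}$ and read off $\widehat\vs$ by the $\F_2^{mk}$ Fourier transform. Your phase identity $\langle\hat\matA^{-1}\widehat\vz,\hat\matA^{\top}\widehat\vs\rangle=\langle\widehat\vz,\widehat\vs\rangle$ is exactly the paper's $\hat\vs\cdot\hat\matA\cdot\hat\matA^{-1}\cdot\widehat\vz$ written with an explicit transpose, and your remark on the existence of $\sigma=2^r$ in the window $(\ast)$ is the content of the paper's comment immediately following the theorem statement.
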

We note that the theorem above assumes a-priori knowledge of $d$.  
This is reasonable in the error-correction setting, but in the computational theory of lattices knowledge of the minimal distance amounts to an oracle to the GapSVP problem which is also known to be a hard problem.
However, by initializing $\sigma = 1$ and executing the algorithm on sequential doubling of the parameter there will be at least one iteration such that $\sigma = 2^r$ satisfies the condition $(\ast)$.
Since the correct answer can be easily checked this essentially removes the need to know $d$ in advance.

\begin{proof}

Assume for now that $\hat\matA$ is invertible over $\F_2$ and
consider the output of step \ref{it:5}
$$
\ket{\psi}
=
q^{-k/2} \cdot \sum_{\vz \in \F_q^k} \ket{\widehat{\vz}}
\otimes
U_{\vt}^{(\hat\matA^{-1} \widehat{\vz})_1}\pcs{\widehat{\va_1}}
\otimes
\hdots
\otimes
U_{\vt}^{(\hat\matA^{-1} \widehat{\vz})_T}\pcs{\widehat{\va_T}}
$$
By our choice of parameters we have:
$$
\sigma \geq n \cdot \Delta_{M,q}(\vt, \matA  \vs).
$$
Since in addition $\sigma = 2^r$, $r<m$,
we can invoke Lemma \ref{lem:ev} which implies:
$$
\forall \widehat{\va} \in \F_2^T
\quad
U_{\vt} \pcs{\widehat{\va}}
=
(-1)^{\hat{\va} \hat{\vs}}
\cdot
\pcs{\widehat{\va}}.
$$
Observe that:
$$
U_{\vt}^1 = U_{\vt}
\quad
U_{\vt}^0 = I.
$$
Hence each PCS state $\pcs{\widehat{\va_i}}$ above is multiplied by a phase $(-1)^{\widehat{\va_i} \hat{\vs}}$ if $(\hat\matA^{-1} \widehat{\vz})_i = 1$ and by phase $1$ if $(\hat\matA^{-1} \widehat{\vz})_i = 0$:
\begin{align}
\ket{\psi}
&=
q^{-k/2} 
\cdot 
\sum_{\vz \in \F_q^k} 
\ket{\widehat{\vz}}
\otimes
(-1)^{(\hat\matA^{-1} \widehat{\vz})_1 \cdot 
\widehat{\va_1} \cdot \hat{\vs}}
\pcs{\widehat{\va_1}}
\otimes
\hdots
\otimes
(-1)^{(\hat\matA^{-1} \widehat{\vz})_T \cdot 
\widehat{\va_T} \cdot \hat{\vs}}
\pcs{\widehat{\va_T}} \\
&=
q^{-k/2} \cdot 
\sum_{\vz \in \F_q^k} 
\ket{\widehat{\vz}}
\otimes
(-1)^{\hat{\vs} \cdot \hat\matA \cdot \hat\matA^{-1} \cdot \hat{\vz}}
\pcs{\widehat{\va_1}}
\otimes
\hdots
\otimes
\pcs{\widehat{\va_T}} \\
&=
\left(
q^{-k/2} \cdot 
\sum_{\vz \in \F_q^k} 
(-1)^{\hat{\vs} \cdot \hat{\vz}}
\ket{\hat{\vz}}
\right)
\otimes
\pcs{\widehat{\va_1}}
\otimes
\hdots
\otimes
\pcs{\widehat{\va_T}}
\end{align}
In this case measuring register 1 in the $\F_2^T$ Fourier basis results in outcome $\hat{\vs}$ with probability $1$.

\noindent
\textbf{Running time:}
Since 
$$
\sigma < d/n
$$
then by Lemma \ref{lem:pcsgen} we can sample PCS states $\pcs{\widehat{\va}}$ such that $\widehat{\va} \sim U[\F_2^T]$ in time $\poly(n, \log(q))$.
By independence of sampling this implies that the entries of $\hat\matA$ are i.i.d. uniform on $\F_2$.
By Lemma \ref{lem:inv} this implies that
$$
\P( \hat\matA \mbox{ is invertible})
\geq \prod_{k=1}^{\infty} (1 - 2^{-k})
- e^{-c  T}
\geq 1/10
$$
It follows that after $O(1)$ iterations of Step \ref{it:sample} the matrix $\hat\matA$ is invertible.
The rest of the computational steps: namely the Quantum Fourier Transform, the controlled shift operation, and multiplication by $\hat\matA, \hat\matA^{-1}$ all take time at most $\poly(n, \log(q))$.

\end{proof}

\subsection{Generalization to Arbitrary Characteristic}

In the previous section we have shown an algorithm to solve BNCP on fields $\F_q$ of characteristic $2$, namely $q = 2^m$ for some integer $m$.
In this section we'll generalize this algorithm to arbitrary characteristic: $q = p^m$ for prime $p$.

We consider again Algorithm \ref{alg:main} previously stated for $p=2$.  For general $p$ we require in Step \ref{it:sample} that $\hat\matA$ is invertible over $\F_p$, and in Step \ref{it:shift1} we consider operators of the form $U_{\vt}^\ell$ where now $\ell$ can assume any number in $\F_p$ (instead of a binary value)
and $U_{\vt}^\ell$ is then interpreted as taking the $\ell$-th power of $U_{\vt}$ where:
$$
U_{\vt}^\ell = 
\underbrace{
U_{\vt}
\cdot
\hdots
\cdot
U_{\vt}}_{\ell \mbox{ times}}
$$ 
\\

\noindent
We now restate Theorem \ref{thm:main}
for prime-power fields $q = p^m$.
We note that the distance to the lattice for which the theorem holds is now decreased by a factor of $p$, i.e. we can solve the problem when the distance $\Delta(\vt, {\cal C})$ is at most $d/(p \cdot n^2)$.  This extra condition is set in order to allow the existence of a value $\sigma = p^r, r<m$ that is at least $\Delta(\vt, {\cal C})$ and at most $d/n$.
As before, this condition can be omitted by making a numerical assumption on the distance from ${\cal C}$.

\begin{theorem}\label{thm:main2}

\noindent
Let ${\cal C} = [n,k,d]$ be an error correcting code over $\F_q^n$ for $q=p^m$ for prime $p$.
Let $(\matA \in \F_q^{n\times k}, \vt\in \F_q^n)$ 
be an instance of $\eps$-BNCP where
$$
\Delta_{M,q}( \vt , \matA \vs) \leq d / (pn^2)
$$ 
for some $\vs \in \F_q^k$.
Then upon input $(\matA, \vt)$ and  parameter $\sigma = p^r, r<m$ that satisfies:
$$
(\ast)
\quad
d/(pn)
<
\sigma
<
d/n
$$   
Algorithm \ref{alg:main} runs in expected time $\poly(n,p, \log(q))$ and returns
results an outcome ${\cal O} = \hat{\vs}$.
\end{theorem}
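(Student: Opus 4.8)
The plan is to show that Theorem~\ref{thm:main2} follows from the $p=2$ argument of Theorem~\ref{thm:main} with only cosmetic changes, by carefully tracking where the binary structure was used and verifying each such place has a $\F_p$-analogue. First I would verify the parameter choice: since $\Delta_{M,q}(\vt,\matA\vs)\le d/(pn^2)$, the interval $(d/(pn),d/n)$ has multiplicative width $p$, so it contains a power $p^r$ with $r<m$ (using $d\le n\cdot p^m$ up to the gap-preservation slack from Proposition~\ref{prop:gap}); fix $\sigma=p^r$ for such an $r$. With this $\sigma$ we have $\sigma\ge d/(pn)\ge n\cdot\Delta_{M,q}(\vt,\matA\vs)$, which is exactly the hypothesis $\Delta_{M,q}(\vt,\matA\vs)\le\sigma/n$ needed to invoke Lemma~\ref{lem:ev}; and $\sigma<d/n$ is exactly the hypothesis needed for the PCS sampler of Lemma~\ref{lem:pcsgen}.

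Next I would run through the seven algorithm steps as in the proof of Theorem~\ref{thm:main}, substituting $\omega_p$ for $-1$ throughout. Sampling $T=km$ states $\pcs{\widehat{\va_i}}$ with $\widehat{\va_i}\sim U[\F_p^{mk}]$ is done by Lemma~\ref{lem:pcsgen}; the matrix $\hat\matA=[\widehat{\va_1},\hdots,\widehat{\va_T}]$ then has i.i.d.\ uniform $\F_p$ entries, so by Lemma~\ref{lem:inv} (with the constant there) it is invertible over $\F_p$ with probability at least $\prod_{k\ge1}(1-p^{-k})-e^{-cT}$, which is bounded below by a positive constant once $T=km$ is large; hence $O(1)$ expected repetitions of Step~\ref{it:sample} suffice. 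Conditioned on invertibility, Lemma~\ref{lem:ev} gives $U_{\vt}\pcs{\widehat{\va_i}}=\omega_p^{-\widehat{\va_i}\cdot\hat{\vs}}\pcs{\widehat{\va_i}}$, and therefore $U_{\vt}^{\ell}\pcs{\widehat{\va_i}}=\omega_p^{-\ell\,\widehat{\va_i}\cdot\hat{\vs}}\pcs{\widehat{\va_i}}$ for any $\ell\in\F_p$ — this is where the generalization of Step~\ref{it:shift1} to $\ell$-th powers is used. So after the controlled-shift in Step~5 the state on the first register has picked up the phase $\omega_p^{-\sum_i (\hat\matA^{-1}\widehat{\vz})_i\,\widehat{\va_i}\cdot\hat{\vs}}=\omega_p^{-\hat{\vs}\cdot\hat\matA\hat\matA^{-1}\widehat{\vz}}=\omega_p^{-\hat{\vs}\cdot\widehat{\vz}}$, where the exponent arithmetic is now over $\F_p$ (equivalently $\Z_p$), and the PCS registers factor out unchanged. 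Applying the $\F_p^{\,T}$ Fourier transform ${\cal F}_p^{\,T}$ and measuring then yields $\hat{\vs}$ with probability $1$.

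I would close by noting the running time: the PCS sampler, the controlled powers $U_{\vt}^{\ell}$ for $\ell<p$ (cost $\poly(p)$ each, or $\poly(\log p)$ by repeated squaring), the multiplications by $\hat\matA,\hat\matA^{-1}$ over $\F_p$, and the Fourier transform ${\cal F}_p^{\,T}$ all cost $\poly(n,p,\log q)$, and only $O(1)$ expected restarts occur; so the total expected time is $\poly(n,p,\log q)$, and the output is ${\cal O}=\hat{\vs}$ as claimed.

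The main obstacle is the verification that nothing in Lemmas~\ref{lem:cs-props}, \ref{lem:ev}, \ref{lem:pcsgen} secretly used $p=2$: the cube-state orthogonality (Item~\ref{it:dist-far}) and invariance (Item~\ref{it:dist-close}) arguments are stated for $\sigma=p^r$ and general $p$ already, and the phase bookkeeping in Lemma~\ref{lem:ev} is written with $\omega_p$, so the only genuinely new ingredients are (i) the $\ell$-th-power controlled shift being well-defined and efficiently implementable for $\ell\in\F_p$, and (ii) the arithmetic identity $\widehat{\va_i}\cdot\hat{\vs}$ interacting correctly with scalar multiplication by $\ell\in\F_p$ inside the exponent of $\omega_p$ — both of which hold because the phase exponents live in $\Z_p$ and $\hat\matA,\hat\matA^{-1}$ are genuine $\F_p$-matrices, so $\hat\matA\hat\matA^{-1}=\matI$ over $\F_p$. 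Once these are checked the rest is verbatim the $p=2$ proof.
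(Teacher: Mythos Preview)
Your proposal is correct and follows essentially the same argument as the paper's proof: invoke Lemma~\ref{lem:ev} for the eigenvalue relation, extend to $U_{\vt}^\ell$ for $\ell\in\F_p$, collapse the phase via $\hat\matA\hat\matA^{-1}=\matI$ over $\F_p$, and bound the invertibility probability using Lemma~\ref{lem:inv}. The only cosmetic discrepancy is that with the phase $\omega_p^{-\hat{\vs}\cdot\hat{\vz}}$ the $\F_p^T$ Fourier measurement yields $-\hat{\vs}$ rather than $\hat{\vs}$ (the paper negates at the end), and your observation that $U_{\vt}^\ell=U_{\ell\vt}$ can be implemented directly is a slight sharpening of the paper's sequential-application bound.
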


\begin{proof}

Assume for now that $\hat\matA$ is invertible over $\F_p$ and
consider the output of step \ref{it:5}
$$
\ket{\psi}
=
q^{-k/2} \cdot \sum_{\vz \in \F_q^k} \ket{\widehat{\vz}}
\otimes
U_{\vt}^{(\hat\matA^{-1} \widehat{\vz})_1}\pcs{\widehat{\va_1}}
\otimes
\hdots
\otimes
U_{\vt}^{(\hat\matA^{-1} \widehat{\vz})_T}\pcs{\widehat{\va_T}}
$$
By our choice of parameters we have:
$$
\sigma > n \cdot \Delta(\vt,\matA \vs)
$$
and $\sigma = p^r$ for $r<m$.
Thus we can invoke
Lemma \ref{lem:ev} which implies:
$$
\forall \widehat{\va} \in \F_p^T, 
\quad
U_{\vt} \pcs{\widehat{\va}}
=
\omega_p^{-\hat{\va} \hat{\vs}}
\cdot
\pcs{\widehat{\va}}.
$$
Therefore
$$
\forall \ell \in \F_p
\quad
U_{\vt}^\ell
=
U_{\vt} U_{\vt} \cdot \hdots \cdot U_{\vt} \pcs{\widehat{\va}}
=
\omega_p^{-\ell \cdot \hat{\va} \hat{\vs}}
\cdot \pcs{\widehat{\va}}
$$
Hence each PCS state $\pcs{\va_i}$ above is multiplied by a phase 
$\omega_p^{- \ell \cdot \hat{\va}_i \hat{\vs}}$ where 
$\ell = (\hat\matA^{-1} \widehat{\vz})_i\in \F_p$:
\begin{align}
\ket{\psi}
&=
q^{-k/2} 
\cdot 
\sum_{\vz \in \F_q^k} 
\ket{\widehat{\vz}}
\otimes
\omega_p^{-(\hat\matA^{-1} \widehat{\vz})_1 \cdot \widehat{\va_1} \cdot \hat{\vs}}
\pcs{\widehat{\va_1}}
\otimes
\hdots
\otimes
\omega_p^{-(\hat\matA^{-1} \widehat{\vz})_T \cdot \widehat{\va_T} \cdot \hat{\vs}}
\pcs{\widehat{\va_T}} \\
&=
q^{-k/2} \cdot 
\sum_{\vz \in \F_q^k} 
\ket{\widehat{\vz}}
\otimes
\omega_p^{-\hat{\vs} \cdot \hat\matA \cdot \hat\matA^{-1} \cdot \hat{\vz}}
\pcs{\widehat{\va_1}}
\otimes
\hdots
\otimes
\pcs{\widehat{\va_T}} \\
&=
\left(
q^{-k/2} \cdot 
\sum_{\vz \in \F_q^k} 
\omega_p^{-\hat{\vs} \cdot \hat{\vz}}
\ket{\hat{\vz}}
\right)
\otimes
\pcs{\widehat{\va_1}}
\otimes
\hdots
\otimes
\pcs{\widehat{\va_T}}
\end{align}
In this case measuring register 1 in the $\F_p^T$ Fourier basis results in outcome $-\hat{\vs}$ with probability $1$. Taking the negation of the answer yields ${\cal O} = \hat{\vs}$.

\noindent
\textbf{Running time:}
Since 
$$
\sigma < d/n
$$
then by Lemma \ref{lem:pcsgen} we can sample PCS states $\pcs{\widehat{\va}}$ such that $\widehat{\va} \sim U[\F_p^T]$ in time $\poly(n, \log(q))$.
So by independence of sampling this implies that the entries of $\hat\matA$ are i.i.d. uniform on $\F_p$.
By Lemma \ref{lem:inv} this implies that
$$
\P( \hat\matA \mbox{ is invertible})
\geq \prod_{k=1}^{\infty} (1 - p^{-k})
- e^{-c  T}
\geq 
1 - 
\sum_{k=1}^{\infty}
p^{-k}
-
e^{-c T}
=
1 - \frac{1/p}{1-1/p}
 - e^{-c T}
$$
$$
=
\frac{1 - 2/p}{1-1/p}
- e^{-cT}
\geq 
1/4
$$
where the last inequality follows from assuming $p\geq 3$ and sufficiently large $T$.
It follows that after $O(1)$ iterations of Step \ref{it:sample} the matrix $\hat\matA$ is invertible.
The rest of the computational steps: namely the Quantum Fourier Transform, the controlled shift operation, and multiplication by $\hat\matA, \hat\matA^{-1}$ all take time at most $\poly(n, p, \log(q))$, where the extra factor of $p$ comes from the fact that $U_{\vt}^p$ is implemented as $p$ sequential applications of $U_{\vt}$.

\end{proof}

\section{Acknowledgements}
The author thanks L{\'e}o Ducas, Saeed Mehraban, Peter Shor,  Nicolas Sendrier, and an anonymous reviewer for their useful comments and suggestions.

\printbibliography

\appendix
\section{Proof of Technical Lemmas}

\subsection{Proof of Lemma \ref{lem:random}}

\begin{proof}
Consider $\va_1,\hdots, \va_k$ random vectors in $\F_q^n$ that generate ${\cal C}$, and a vector of coefficients $\vx = (x_1,\hdots, x_k) \in \F_q^k$. 
We have
\begin{align}
\P
\left(
\exists \vc\in {\cal C},\ \vc \neq 0
\ \|\vc\|_{\rm M} \leq n \cdot q^{1-k/n}/2
\right) 
&\leq
\P_{\va_1,\hdots \va_k}
\left(
\exists \vx \in \F_q^k,
\vx \neq 0, 
\quad
\left\|\sum_{i=1}^k \va_i x_i \right\|_{\rm M} \leq n \cdot q^{1-k/n}/2
\right) \\
&\leq
q^k
\cdot
\P_{\va_i \sim U[\F_q^n], \vx \neq 0}
\left(
\left\|\sum_{i=1}^k \va_i x_i \right\|_{\rm M} \leq n \cdot q^{1-k/n}/2
\right)  
\end{align}
Considering the above, for any nonzero $\vx\in \F_q^k$ each coordinate $z_j\in \F_q, j\in [n]$ of $\vz = \sum_i \va_i x_i$ is a uniformly
random variable on $\F_q$ that is independent of all other coordinates.
We have
$$
\P( |z_j| \leq q^{1-k/n} )
\leq q^{-k/n}. 
$$
Thus, by Chernoff:
$$
\P
\left(
\sum_i |z_i|
\leq
n \cdot q^{1-k/n}/2
\right)
\leq
2^{-n \cdot (q^{-k/n})^2/16}
\leq
2^{-n / 200}
$$
where the last inequality follows from $n \geq k \log(q)$.
Applying the contrapositive of Proposition \ref{prop:gap}
we conclude that
$$
\P
\left(
\min_{\vx\neq \vy, \vx, \vy \in {\cal C}}
\Delta_{M,q}(\vx, \vy) < q^{1-k/n}/2
\right) < 2^{-n/200}
$$
\end{proof}

\subsection{Proof of Theorem \ref{thm:hardness}}

Consider the problem of $c$-approximate set-cover: a ground-set ${\cal U}$, and a collection of subsets $S_1,\hdots, S_m$. A cover is a sub-collection of the $S_i$'s whose union is ${\cal U}$.
The cover is exact if the sets in the cover are disjoint. 
The size of a cover is the number of sets that comprise it.

The construction of \cite{ABSS97} defines $m+1$ vectors $b_0,\hdots, b_m \in \F_2^r$, $r = L |{\cal U}| + m$, $L = c K$ as follows:
for each set $S_i$ we define a vector $b_i$ on $L \cdot |U| + m$.
The first $L |{\cal U}|$ coordinates are considered as $|{\cal U}|$ tuples of $L$ coordinates each, where each tuple corresponds to an element of ${\cal U}$.
$b_i$ is zero except for the $L \cdot |S_i|$ coordinates corresponding to the characteristic vector of $S_i$.  The last $m$ coordinates are zero except at the $i$-th position which is $1$.
The vector $b_0$ is the all-ones vector on the first $L |{\cal U}|$ coordinates, and $0$ on the last $m$.
We now claim similarly to \cite{ABSS97}:
\begin{lemma}
Let $q = p^m$ and suppose that $c>p$.
Let $L = c \cdot K$.
Define:
$$
{\rm OPT} = \min_{\alpha}
\dman
\left( b_0, \sum_i \alpha_i b_i 
\right)
$$
and let ${\cal C}$ denote the linear span of the vector $b_1,\hdots, b_m$ over $\F_q$.
If there exists an exact cover of size $K$ then
\be\label{eq:yes}
\dman(b_0, {\cal C}) \leq p \cdot K
\ee
and if any cover is of size at most $c \cdot K$ then
\be\label{eq:no}
\dman(b_0, {\cal C}) 
\geq
c  \cdot K 
\ee
\end{lemma}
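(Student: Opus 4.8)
The plan is to express $\dman(b_0,\sum_i\alpha_i b_i)$ as a sum over the two structural parts of the construction — the $|{\cal U}|$ \emph{element-blocks} of $L$ coordinates each, and the $m$ \emph{selector coordinates} — and then read both inequalities off that expression.

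Fix $\alpha=(\alpha_1,\dots,\alpha_m)\in\F_q^m$ and write $\vc_\alpha:=\sum_{i=1}^m\alpha_i b_i$. In the element-block of $u\in{\cal U}$, each of its $L$ coordinates of $\vc_\alpha$ equals $\beta_u:=\sum_{i:\,u\in S_i}\alpha_i$ (computed in $\F_q$), while $b_0$ equals $1\in\F_q$ there; in selector coordinate $j\in[m]$, $\vc_\alpha$ equals $\alpha_j$ while $b_0$ equals $0$. Since $\tilde 1=1$ and $a\mapsto\tilde a$ is a bijection of $\F_q$ onto $\{0,1,\dots,q-1\}$, this gives
\[
\dman(b_0,\vc_\alpha) \;=\; L\cdot\!\!\sum_{u\in{\cal U}}\bigl|\,1-\tilde\beta_u\,\bigr| \;+\; \sum_{j=1}^m\tilde\alpha_j .
\]
Two elementary facts then drive everything: (i) the $u$-th term of the first sum is $0$ if $\beta_u=1$ in $\F_q$ and is $\geq L$ otherwise; (ii) $\sum_j\tilde\alpha_j\geq\bigl|\{\,j:\alpha_j\neq 0\,\}\bigr|$, and if $\beta_u=1$ for every $u$ then $\{\,S_j:\alpha_j\neq 0\,\}$ is a set cover of ${\cal U}$, since each $u$ lies in some $S_i$ with $\alpha_i\neq 0$ (otherwise $\beta_u$ would be an empty sum, hence $0\neq 1$).

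For \eqref{eq:yes}: given an exact cover $\{\,S_j:j\in I\,\}$ with $|I|=K$, set $\alpha_j=1$ for $j\in I$ and $\alpha_j=0$ otherwise; exactness puts each $u$ in precisely one chosen set, so $\beta_u=1\in\F_q$ for all $u$, the first sum vanishes, and $\dman(b_0,\vc_\alpha)=|I|=K\leq pK$, whence $\dman(b_0,{\cal C})\leq pK$. For \eqref{eq:no} I would prove the contrapositive: suppose some $\alpha$ satisfies $\dman(b_0,\vc_\alpha)< cK=L$. The first sum is a sum of nonnegative multiples of $L$ and is strictly less than $L$, so it vanishes identically; by (i) this forces $\beta_u=1$ for all $u$, and then by (ii) the support $\{\,S_j:\alpha_j\neq 0\,\}$ is a cover of size $\bigl|\{\,j:\alpha_j\neq 0\,\}\bigr|\leq\sum_j\tilde\alpha_j=\dman(b_0,\vc_\alpha)< cK$. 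Thus a codeword within Manhattan distance $< cK$ of $b_0$ yields a cover of size $< cK$; equivalently, if the instance admits no cover of size $< cK$ then $\dman(b_0,{\cal C})\geq cK$.

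The only point needing care is the bookkeeping with the extended Manhattan distance: one must check that the value filling an element-block is \emph{exactly} $\beta_u$ reduced in $\F_q$ (not an integer multiplicity), using Proposition~\ref{prop:1} and the bijectivity of the $p$-ary expansion so that $|1-\tilde\beta_u|=0$ only when $\beta_u=1$. The hypothesis $c>p$ is not used inside either implication; it is precisely what makes the resulting promise gap $[\,pK,\,cK\,]=[\,pK,\,(c/p)\cdot pK\,]$ nontrivial, and combined with the NP-hardness of $c$-approximate set cover for every constant $c$ it yields Theorem~\ref{thm:hardness}.
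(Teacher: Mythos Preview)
Your proof is correct and follows the same structural approach as the paper: split the Manhattan distance into the element-block contribution and the selector-coordinate contribution, exhibit a witness codeword for the upper bound, and for the lower bound do a case analysis on whether the element-blocks already match $b_0$. Your choice $\alpha_j=1$ for the witness (giving distance $K$) is actually simpler and tighter than the paper's choice $\alpha_{i_j}=p-1$ (giving $(p-1)K$), and your dichotomy on $\beta_u=1$ versus $\beta_u\neq 1$ is slightly cleaner than the paper's on $\beta_u\neq 0$ versus $\beta_u=0$, but the logic is the same.
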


\begin{proof}
Let $p$ denote the characteristic of $\F_q$, i.e. $q = p^m$ for some integer $m>0$.
If there exists an exact cover $S_{i_1}, \hdots, S_{i_K}$ then choosing 
$\alpha_{i_j}= p-1$ for all $j\in [K]$ and $0$ otherwise has that
$b_0 + \sum_i \alpha_i b_i$ is equal to $0$ on the first $|{\cal U}| L$ bits.  On the last $m$ bits the Manhattan weight of
$b_0 + \sum_i \alpha_i b_i$
is precisely $\sum_i \alpha_i = (p-1) \cdot K$.
Hence
$$
\dman(b_0, {\cal C}) 
\leq
(p-1) \cdot K
$$

Suppose now that any cover has size at least $L = c \cdot K$.
Let $\alpha = (\alpha_1,\hdots, \alpha_r)$ denote an assignment vector $\alpha_i \in \F_q$.
First, suppose that $\sum_i \alpha_i b_i$ has non-zero coordinates on each of the first $|{\cal U}|$ tuples of $L$ bits.
Then each tuple is "covered" by at least one vector $b_i$, that corresponds to set $S_i$, and $b_i$ is multiplied by a non-zero coefficient $\alpha_i$. Thus $\sum_i \alpha_i \geq c \cdot K = L$ 
this is manifested in the last $m$ coordinates, implying
$$
\dman(b_0, \sum_i \alpha_i b_i) \geq L 
$$
On the other hand, if not all tuples are covered, i.e. there is at least one $L$-tuple that is all zeros, then the Manhattan distance on the first $|{\cal U}| L$ coordinates is at least $L$, implying
$$
\dman(b_0, \sum_i \alpha_i b_i) \geq L 
$$

\end{proof}
The proof of Theorem \ref{thm:hardness} follows by applying the lemma in conjunction with the fact that there exists a constant $c>0$ such that it is NP-hard to approximate exact set-cover to factor at most $c$.

\end{document}